\newtheorem{thm}{Theorem}
\newtheorem{lem}{Lemma}
\newtheorem{example}{Example}
\newtheorem{remark}{Remark}
\begin{document}
\date{}
\title{MDS Symbol-Pair Codes from Repeated-Root Cyclic Codes}
\maketitle
\begin{center}
\author{\large Junru Ma \quad\quad Jinquan Luo
\footnote{Corresponding author
\par
The authors are with School of Mathematics
and Statistics \& Hubei Key Laboratory of Mathematical Sciences, Central China Normal University, Wuhan China, 430079.
\par
E-mails: junruma@mails.ccnu.edu.cn(J.Ma), luojinquan@mail.ccnu.edu.cn(J.Luo).
}}
\end{center}

\begin{quote}
  {\small {\bf Abstract:} \ \
  Symbol-pair codes are proposed to combat pair-errors in symbol-pair read channels.
  The minimum symbol-pair distance is of significance in determining the error-correcting capability of a symbol-pair code.
  Maximum distance separable (MDS) symbol-pair codes are optimal in the sense that such codes can achieve the Singleton bound.
  In this paper, two new classes of MDS symbol-pair codes are proposed utilizing repeated-root cyclic codes over finite fields with odd characteristic.
  Precisely, these codes poss minimum symbol-pair distance ten or twelve, which is bigger than all the known MDS symbol-pair codes from constacyclic codes.}

  {\small {\bf Keywords:} \ \ MDS symbol-pair code, \ AMDS symbol-pair code, \ minimum symbol-pair distance, \ constacyclic codes, \ repeated-root cyclic codes}
\end{quote}

\section{Introduction}

In information theory, noisy channels are analyzed generally by dividing the message into independent information units.
With the development of modern high-density data storage systems, the reading process may be lower than that of the process used to store the data.
Motivated by this situation, a new coding framework named symbol-pair code was proposed by Cassuto and Blaum (2010) to guard against pair-errors over symbol-pair channels in \cite{CB1}.
Cassuto and Blaum firstly studied symbol-pair codes on pair-error correctability conditions, code construction, decoding methods and asymptopic bounds in \cite{CB1,CB2}.
Shortly afterwards, Cassuto and Litsyn \cite{CL} established that codes for correcting pair-errors exist with strictly higher rates compared to codes for the Hamming metric with the same relative distance.
Later, researchers further investigated symbol-pair codes, including the construction of symbol-pair codes \cite{CJKWY,CKW,CLL,DGZZZ,DNS,EGY,KZL,KZZLC,LG,ML1}, some decoding algorithms of symbol-pair codes \cite{HMH,LXY,MHT,THM,YBS} and the symbol-pair weight distribution of some linear codes \cite{DNSS1,DNSS2,DWLS,ML2,SZW}.

The minimum symbol-pair distance plays an important role in determining the error-correcting capability of a symbol-pair code.
Cassuto and Blaum \cite{CB1} determined that a code $\mathcal{C}$ with minimum symbol-pair distance $d_{p}$ can correct up to $\lfloor\frac{d_{p}-1}{2}\rfloor$ symbol-pair errors.
In 2012, Chee et al. \cite{CKW} derived a Singleton-type bound on symbol-pair codes.
Similar to classical error-correcting codes, the symbol-pair codes achieving the Singleton-type bound are called MDS symbol-pair codes.
Recently, the construction of MDS symbol-pair codes has attracted the attention of many researchers.
In general, there are two methods to construct MDS symbol-pair codes.
The first one is based on linear codes with certain properties, such as MDS codes \cite{CJKWY,CKW} and constacyclic (cyclic) codes \cite{CLL,KZL,KZZLC,LG,ML1}.
The second method is to construct MDS symbol-pair codes by utilizing interleaving techniques \cite{CJKWY,CKW}, Eulerian graphs \cite{CJKWY,CKW}, projective geometry \cite{DGZZZ} and algebraic geometry codes over elliptic curves \cite{DGZZZ}.

In Table \ref{Tab-constacyclic-MDS}, we summarize all currently known MDS symbol-pair codes from constacyclic codes.
\begin{table}[!htb]
  \caption{Known MDS symbol-pair codes from constacyclic codes}\label{tab:tablenotes}
  \centering
  \begin{threeparttable}\label{Tab-constacyclic-MDS}
  \begin{tabular}{ccc}
  \hline $(n,\,d_{p})_{q}$      &Condition                &Reference\\
  \hline $\left(n,\,5\right)_{q}$       &$n\,|\left(q^{2}+q+1\right)$
    &\cite{KZL},\cite{LG} \\
  \hline $\left(n,\,6\right)_{q}$       &$n\,|\left(q^{2}+1\right)$
    &\cite{KZL},\cite{LG} \\
  \hline $\left(n,\,6\right)_{q}$
    &$n\,|\left(q^{2}-1\right)$,\,$n$\,odd or $n$ even and $v_{2}\left(n\right)<v_{2}\left(q^{2}-1\right)$     &\cite{LG} \\
  \hline $\left(n,\,6\right)_{q}$
    &$q\geq3,\,n\geq q+4,\,n\,|\left(q^{2}-1\right)$    &\cite{CLL} \\
  \hline $\left(lp,\,5\right)_{p}$
    &$p\geq5,\,l>2,\,{\rm gcd}\left(l,\,p\right)=1,\,l\,|\left(p-1\right)$
    &\cite{CLL} \\
  \hline $\left(p^{2}+p,\,6\right)_{p}$    &$p\geq3$          &\cite{KZZLC} \\
  \hline $\left(2p^{2}-2p,\,6\right)_{p}$  &$p\geq3$          &\cite{KZZLC} \\
  \hline $\left(3p,\,6\right)_{p}$         &$p\geq5$          &\cite{CLL}   \\
  \hline $\left(3p,\,7\right)_{p}$         &$p\geq5$          &\cite{CLL}   \\
  \hline $\left(4p,\,7\right)_{p}$   &$p\equiv 3\left({\rm mod}\,4\right)$ &\cite{KZZLC}\\
  \hline $\left(3p,\,8\right)_{p}$   &$3\,|\left(p-1\right)$  &\cite{CLL} \\
  \hline $\left(3p,\,10\right)_{p}$  &$3\,|\left(p-1\right)$
    &Theorem \ref{thmMDS3p10p} \\
  \hline $\left(3p,\,12\right)_{p}$        &$3\,|\left(p-1\right)$
    &Theorem \ref{thmMDS3p12p} \\
  \hline
  \end{tabular}
  \begin{tablenotes}
        \footnotesize
          \item  where $q$ is a power of prime $p$.
  \end{tablenotes}
  \end{threeparttable}
\end{table}
As we can see, most known codes in Table \ref{Tab-constacyclic-MDS} poss a fairly small symbol-pair distance.
The construction of symbol-pair codes with comparatively large minimum symbol-pair distance is a very interesting problem.
It is shown in\cite{DGZZZ} that there exist $q$-ary MDS symbol-pair codes from algebraic geometry codes over elliptic curves with larger minimum symbol-pair distance. But their lengths are bounded by $q+2\sqrt{q}$.
Inspired by the aforementioned works, in this paper, we propose two new classes of $p$-ary MDS symbol-pair codes with length $n=3p$ by employing repeated-root cyclic codes. Notably, these codes poss minimum symbol-pair distance $10$ or $12$, which is bigger than all the known MDS symbol-pair codes from constacyclic codes.

The rest of this paper is organized as follows. In Section $2$, we introduce some basic notations and results on symbol-pair codes and constacyclic codes.
By means of repeated-root cyclic codes, we investigate MDS symbol-pair codes in Section $3$.
In Section $4$, we make some conclusions.

\section{Preliminaries}

In this section, we review some basic notations and results on symbol-pair codes and constacyclic codes, which will be used to prove our main results in the sequel.

\subsection{Symbol-pair Codes}

Let $q=p^{m}$ and $\mathbb{F}_{q}$ denote the finite field with $q$ elements, where $p$ is a prime and $m$ is a positive integer.
Throughout this paper, let $\star$ be an element in $\mathbb{F}_{q}^{*}$ and $\mathbf{0}$ denotes the all-zero vector.
Let $n$ be a positive integer.
From now on, we always take the subscripts modulo $n$.
For any vector $\mathbf{x}=\left(x_{0}, x_{1}, \cdots, x_{n-1}\right)$ in $\mathbb{F}_{q}^{n}$, the symbol-pair read vector of $\mathbf{x}$ is
\begin{equation*}
  \pi\left(\mathbf{x}\right)=\left(\left(x_{0},\,x_{1}\right),
  \,\left(x_{1},\,x_{2}\right),\cdots,\left(x_{n-2},\,x_{n-1}\right),
  \,\left(x_{n-1},\,x_{0}\right)\right).
\end{equation*}
Observe that every vector $\mathbf{x}\in \mathbb{F}_{q}^{n}$ has a unique pair representation $\pi(\mathbf{x})$.
Denote by $\mathbb{Z}_{n}$ the residue class ring $\mathbb{Z}/n\mathbb{Z}$.
Recall that the {\it Hamming weigh} of $\mathbf{x}$ is
\begin{equation*}
  w_{H}(\mathbf{x})=\left|\left\{i\in \mathbb{Z}_{n}\,\big|\,x_{i}\neq 0\right\}\right|.
\end{equation*}
Accordingly, the {\it symbol-pair weight} of $\mathbf{x}$ is defined by
\begin{equation*}
  w_{p}\left(\mathbf{x}\right)=\left|\left\{i\in \mathbb{Z}_{n}\,\big|\,\left(x_{i},\,x_{i+1}\right)\neq \left(0,\,0\right)\right\}\right|.
\end{equation*}
For any two vectors $\mathbf{x},\,\mathbf{y}\in \mathbb{F}_{q}^{n}$, the {\it symbol-pair distance} between $\mathbf{x}$ and $\mathbf{y}$ is
\begin{equation*}
  d_{p}\left(\mathbf{x},\,\mathbf{y}\right)=\left|\left\{i\in \mathbb{Z}_{n}\,\big|\,\left(x_{i},\,x_{i+1}\right)\neq \left(y_{i},\,y_{i+1}\right)\right\}\right|.
\end{equation*}
A code $\mathcal{C}$ is said to have {\it minimum symbol-pair distance} $d_{p}$ if
\begin{equation*}
  d_{p}={\rm min}\left\{d_{p}\left(\mathbf{x},\,\mathbf{y}\right)\,\big|\, \mathbf{x},\,\mathbf{y}\in \mathcal{C},\mathbf{x}\neq\mathbf{y}\right\}.
\end{equation*}
Elements of $\mathcal{C}$ are called {\it codewords} in $\mathcal{C}$.
It is shown in \cite{CB1,CB2} that for any $0<d_H(\mathcal{C})<n$,
\begin{equation}\label{eqdhdp}
  d_H(\mathcal{C})+1\leq d_{p}\left(\mathcal{C}\right)\leq 2\cdot d_H(\mathcal{C}).
\end{equation}

Similar to classical error-correcting codes, the size of a symbol-pair code satisfies the following Singleton bound.

\begin{lem}{\rm(\!\cite{CJKWY}\,)}\label{Singleton}
Let $q\geq 2$ and $2\leq d_{p}\leq n$.
If $\mathcal{C}$ is a symbol-pair code with length $n$ and minimum symbol-pair distance $d_{p}$, then $\left|\mathcal{C}\right|\leq q^{n-d_{p}+2}$.
\end{lem}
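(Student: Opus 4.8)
The plan is to establish the Singleton-type bound for symbol-pair codes by a counting-and-injection argument, mirroring the classical proof of the ordinary Singleton bound but adapted to the pair metric. The goal is to show that if $\mathcal{C}$ has length $n$ and minimum symbol-pair distance $d_p$, then the number of codewords cannot exceed $q^{n-d_p+2}$. The natural strategy is to construct a suitable projection (puncturing) map from $\mathcal{C}$ into a smaller ambient space and argue that this map is injective, so that $|\mathcal{C}|$ is bounded by the size of the target space.

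\medskip
\noindent
\textbf{Setup of the projection.}
First I would fix a set $S$ of $d_p - 2$ consecutive coordinate positions (indices taken modulo $n$, as the paper stipulates), say $S = \{0, 1, \dots, d_p-3\}$, and define the map $\phi \colon \mathcal{C} \to \mathbb{F}_q^{\,n - (d_p-2)}$ that deletes the entries of each codeword indexed by $S$. The image lives in a space of dimension $n - d_p + 2$, which has exactly $q^{n-d_p+2}$ elements, so the desired bound follows \emph{provided} $\phi$ is injective on $\mathcal{C}$. The reason for deleting \emph{consecutive} positions (rather than an arbitrary set, as in the Hamming case) is the crucial point: the symbol-pair weight of a vector counts indices $i$ for which the \emph{pair} $(x_i, x_{i+1})$ is nonzero, so a single nonzero entry $x_j$ contributes to two pairs, namely the one indexed $j-1$ and the one indexed $j$. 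Puncturing a block of consecutive coordinates controls how many pair-positions are affected in a predictable way.

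\medskip
\noindent
\textbf{The injectivity step.}
The heart of the argument is to show $\phi$ is one-to-one. Suppose $\mathbf{x}, \mathbf{y} \in \mathcal{C}$ with $\phi(\mathbf{x}) = \phi(\mathbf{y})$ but $\mathbf{x} \neq \mathbf{y}$; then $\mathbf{x}$ and $\mathbf{y}$ agree on every coordinate outside the consecutive block $S$ of size $d_p - 2$, so they differ only within $S$. I would then bound $d_p(\mathbf{x}, \mathbf{y})$ from above: the pair-positions $i$ at which $(x_i, x_{i+1})$ can differ from $(y_i, y_{i+1})$ are exactly those pairs overlapping $S$, namely the indices $i \in \{-1, 0, 1, \dots, d_p-3\}$ (the pair indexed $-1$ involves coordinate $0 \in S$, and the pair indexed $d_p-3$ involves coordinate $d_p-3 \in S$). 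That is at most $(d_p - 2) + 1 = d_p - 1$ positions where the pair representations can differ. Hence $d_p(\mathbf{x}, \mathbf{y}) \le d_p - 1 < d_p$, contradicting the definition of the minimum symbol-pair distance. Therefore no two distinct codewords collapse under $\phi$, so $\phi$ is injective and $|\mathcal{C}| = |\phi(\mathcal{C})| \le q^{n - d_p + 2}$.

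\medskip
\noindent
\textbf{Anticipated obstacle.}
The routine part is the size count of the target space; the step demanding care is the exact bookkeeping of which pair-positions are affected when a consecutive block is punctured, since the cyclic wrap-around (indices mod $n$) means the block of $d_p - 2$ deleted coordinates touches exactly $d_p - 1$ consecutive pairs, not $d_p - 2$. The hypothesis $2 \le d_p \le n$ guarantees the block $S$ is a proper, well-defined set of positions and that the wrap-around does not cause the affected pairs to overlap ambiguously, so the count $d_p - 1$ is tight. Making this adjacency argument fully rigorous — and confirming that deleting a \emph{consecutive} block is the correct choice to force the contradiction — is the main conceptual hurdle; once that is settled, the bound is immediate.
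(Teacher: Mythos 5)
Your proof is correct: the puncturing map on a block of $d_p-2$ consecutive coordinates is injective because two codewords agreeing outside that block can have differing pairs only at the $d_p-1$ indices touching the block, forcing $d_p(\mathbf{x},\mathbf{y})\le d_p-1<d_p$. The paper itself states this lemma without proof, citing \cite{CJKWY}, and your argument is essentially the standard one given in that reference, so there is nothing to bridge.
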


The symbol-pair code achieving the Singleton bound is called a {\it maximum distance separable} (MDS) symbol-pair code.
For a linear code of length $n$, dimension $k$ and minimum symbol-pair distance $d_p$, if $d_p=n-k+1$, then it is called an {\it almost maximum distance separable} (AMDS) symbol-pair code.

\subsection{Constacyclic Codes}

In this subsection, we review some basic concepts of constacyclic codes.
For any $\eta\in\mathbb{F}_{q}^{*}$, the $\eta$-constacyclic shift $\tau_{\eta}$ on $\mathbb{F}_{q}^n$ is defined as
\begin{equation*}
  \tau_{\eta}\left(x_0,\,x_1,\cdots,x_{n-1}\right)
  =\left(\eta\,x_{n-1},\,x_0,\cdots,x_{n-2}\right).
\end{equation*}
A linear code $\mathcal{C}$ is an {\it$\eta$-constacyclic code} if $\tau_{\eta}\left(\mathbf{c}\right)\in\mathcal{C}$ for any codeword $\mathbf{c}\in\mathcal{C}$.
An $\eta$-constacyclic code is called a {\it cyclic code} if $\eta=1$ and a {\it negacyclic code} if $\eta=-1$.
Note that each codeword $\mathbf{c}=\left(c_0,\,c_1,\cdots,c_{n-1}\right)\in\mathcal{C}$ can be identified with a polynomial
\begin{equation*}
  c(x)=c_0+c_1\,x+\cdots+c_{n-1}\,x^{n-1}.
\end{equation*}
In this paper, we always regard the codeword $\mathbf{c}$ in $\mathcal{C}$ as the corresponding polynomial $c(x)$.
Indeed, a linear code $\mathcal{C}$ is an $\eta$-constacyclic code if and only if it is an ideal of the principle ideal ring $\mathbb{F}_{q}[x]/\langle x^n-\eta\rangle$.
Consequently, there is a unique monic polynomial $g(x)\in\mathbb{F}_{q}[x]$ with $g(x)\,|\,\left(x^n-\eta\right)$ and
\begin{equation*}
  \mathcal{C}=\left\langle g(x)\right\rangle=\left\{f(x)\,g(x)\,\left({\rm mod}\,x^n-\eta\right)\,\big|\,f(x)\in\mathbb{F}_{q}\left[x\right]\right\}.
\end{equation*}
We refer $g(x)$ as the {\it generator polynomial} of $\mathcal{C}$ and the dimension of $\mathcal{C}$ is $n-{\rm deg}\left(g(x)\right)$.

An $\eta$-constacyclic code of length $n$ over $\mathbb{F}_{q}$ is called a {\it simple-root} constacyclic code if $n$ and $p$ are relatively co-prime and a {\it repeated-root} constacyclic code if $p\,|\,n$.
Note that simple-root constacyclic codes can be characterized by their defining sets.
Furthermore, the BCH bound and the Hartmann-Tzeng bound for simple-root cyclic codes can be obtained by calculating the consecutive roots of the generator polynomial \cite{HP,MS}.
However, repeated-root cyclic codes cannot be directly characterized by sets of zeros.

Let $\mathcal{C}=\left\langle g(x)\right\rangle$ be a repeated-root cyclic code of length $lp^e$ over $\mathbb{F}_{q}$ with ${\rm gcd}\left(l,\,p\right)=1$ and
\begin{equation*}
  g(x)=\prod_{i=1}^{r}m_{i}(x)^{e_i}
\end{equation*}
the factorization of $g(x)$ into distinct monic irreducible polynomials $m_i(x)\in\mathbb{F}_q\left[x\right]$ of multiplicity $e_i$.
For any $0\leq t\leq p^e-1$, we denote $\overline{\mathcal{C}}_{t}$ the simple-root cyclic code of length $l$ over $\mathbb{F}_q$ with generator polynomial
\begin{equation*}
  \overline{g}_t(x)=\prod_{1\leq i\leq r,\,e_i>t}m_i(x).
\end{equation*}
If this product turns out to be $x^{l}-1$, then $\overline{\mathcal{C}}_{t}$ contains only the all-zero codeword and we set $d_H(\overline{\mathcal{C}}_{t})=\infty$.
If all $e_i(1\leq i\leq s)$ satisfy $e_i\leq t$, then we set $\overline{g}_t(x)=1$ and $d_H(\overline{\mathcal{C}}_{t})=1$.

The following lemma obtained from \cite{CMSS} indicates that the minimum Hamming distance of $\mathcal{C}$ can be derived from $d_{H}(\overline{\mathcal{C}}_{t})$, which will be used to determine the minimum Hamming distance of codes in Section $3$.

\begin{lem}{\rm(\!\cite{CMSS}\,)}\label{lemdistance}
Let $\mathcal{C}$ be a repeated-root cyclic code of length $lp^e$ over $\mathbb{F}_{q}$, where $l$ and $e$ are positive integers with ${\rm gcd}\left(l,\,p\right)=1$.
Then
\begin{equation}\label{eqdistance}
  d_{H}(\mathcal{C})={\rm min}\left\{P_{t}\cdot d_{H}\left(\overline{\mathcal{C}}_{t}\right)\,\big|\,0\leq t\leq p^e-1\right\}
\end{equation}
where
\begin{equation}\label{eqpt}
  P_{t}=w_{H}\left((x-1)^t\right)=\prod_{i}\left(t_{i}+1\right)
\end{equation}
with $t_{i}$'s being the coefficients of the radix-$p$ expansion of $t$.
\end{lem}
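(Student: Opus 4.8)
The plan is to prove the two inequalities $d_{H}(\mathcal{C})\ge\min_{t}P_{t}\,d_{H}(\overline{\mathcal{C}}_{t})$ and $d_{H}(\mathcal{C})\le\min_{t}P_{t}\,d_{H}(\overline{\mathcal{C}}_{t})$ separately, after a structural reduction. Since $\mathrm{char}\,\mathbb{F}_{q}=p$ and $n=lp^{e}$, we have $x^{n}-1=(x^{l}-1)^{p^{e}}$. Writing $v=x^{l}$, the subring $S=\mathbb{F}_{q}[x^{l}]/\langle x^{n}-1\rangle\cong\mathbb{F}_{q}[v]/\langle(v-1)^{p^{e}}\rangle$ is local with maximal ideal $\langle v-1\rangle$, and $R=\mathbb{F}_{q}[x]/\langle x^{n}-1\rangle$ is a free $S$-module with basis $1,x,\dots,x^{l-1}$. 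Grouping coordinates by residue modulo $l$ writes each $c\in R$ as $c=\sum_{i=0}^{l-1}x^{i}c_{i}(v)$ with $c_{i}\in S$, and because these $l$ ``columns'' occupy disjoint coordinate positions, $w_{H}(c)=\sum_{i=0}^{l-1}w_{H}(c_{i})$. This reduces the whole problem to a single-variable estimate inside $S$.

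The technical heart is the following claim in $S$: if $c_{i}\in S$ has $(v-1)$-adic valuation exactly $t$, then $w_{H}(c_{i})\ge P_{t}$, with equality attained by $(v-1)^{t}$. First I would record $P_{t}=w_{H}((v-1)^{t})$ via the factorization $(v-1)^{t}=\prod_{i}(v^{p^{i}}-1)^{t_{i}}$, valid in characteristic $p$, whose $p^{i}$-scaled supports combine to give $\prod_{i}(t_{i}+1)$ nonzero coefficients, recovering \eqref{eqpt} and Lucas' theorem. For the lower bound, valuation $\ge t$ is equivalent to the vanishing of the first $t$ Hasse derivatives at $v=1$; when $e=1$ the exponents are distinct modulo $p$, so any $t$ of these conditions form a nonsingular confluent-Vandermonde system and a word with at most $t$ nonzero coordinates is forced to vanish, giving $w_{H}\ge t+1=P_{t}$. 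The general case follows by induction on $e$ through the digit decomposition $k=pk'+r$ with $u=v^{p}$ and $u-1=(v-1)^{p}$, which reduces the estimate to the ring $\mathbb{F}_{q}[u]/\langle(u-1)^{p^{e-1}}\rangle$ and accounts for the factor $t_{0}+1$ coming from the lowest $p$-adic digit.

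With the claim in hand, the lower bound for $d_{H}(\mathcal{C})$ is short. Given a nonzero $c\in\mathcal{C}$, let $t$ be the largest power with $(x^{l}-1)^{t}\mid c$; since $\deg c<n=\deg(x^{l}-1)^{p^{e}}$ we have $0\le t\le p^{e}-1$. Put $b=c/(x^{l}-1)^{t}$ and $\overline{b}=b\bmod(x^{l}-1)$. Maximality gives $\overline{b}\neq\mathbf{0}$, and since $x^{l}-1$ is separable each $m_{i}$ divides it exactly once, so $m_{i}^{e_{i}}\mid c$ forces $m_{i}\mid b$ whenever $e_{i}>t$; hence $\overline{g}_{t}\mid\overline{b}$, i.e. $\overline{b}\in\overline{\mathcal{C}}_{t}\setminus\{\mathbf{0}\}$ and $w_{H}(\overline{b})\ge d_{H}(\overline{\mathcal{C}}_{t})$. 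Multiplication by $(x^{l}-1)^{t}=(v-1)^{t}$ acts columnwise, so the columns of $c$ are the $(v-1)^{t}b_{i}$; those with $b_{i}(1)\neq0$ have valuation exactly $t$ and form precisely the support of $\overline{b}$, whence the claim gives $w_{H}(c)=\sum_{i}w_{H}((v-1)^{t}b_{i})\ge P_{t}\,w_{H}(\overline{b})\ge P_{t}\,d_{H}(\overline{\mathcal{C}}_{t})$. Minimizing over $t$ finishes this direction; the conventions $d_{H}(\overline{\mathcal{C}}_{t})=\infty$ and $=1$ correspond exactly to the cases where no codeword of valuation exactly $t$ exists and where $(x^{l}-1)^{t}$ itself already lies in $\mathcal{C}$.

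The main obstacle is the reverse inequality, where for each $t$ I must exhibit a genuine codeword of weight exactly $P_{t}\,d_{H}(\overline{\mathcal{C}}_{t})$. The naive candidate $(x^{l}-1)^{t}\overline{c}$, with $\overline{c}$ a minimal word of $\overline{\mathcal{C}}_{t}$, generally fails to lie in $\mathcal{C}$ because the higher multiplicities $m_{i}^{e_{i}}$ with $e_{i}>t+1$ are unmet, and naively restoring divisibility by extra factors of $v-1$ inflates each column weight from $P_{t}$ to $P_{t+s}$. The correct construction must instead choose the $l$ column polynomials $b_{i}(v)$ \emph{non-proportionally}, so that simultaneously each nonzero column lies in the equality locus of the claim (that is, $w_{H}((v-1)^{t}b_{i})=P_{t}$ with $b_{i}(1)\neq0$) and the assembled $(x^{l}-1)^{t}b$ meets every $m_{i}^{e_{i}}$, a CRT-type interpolation across the distinct factors $m_{i}$. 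I expect this to be carried out by starting from $\overline{c}$ and adjusting each column within its weight-$P_{t}$ family, and it is here that the full force of the repeated-root structure is used.
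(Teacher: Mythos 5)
You are proving a lemma that the paper itself does not prove: it is quoted verbatim from \cite{CMSS}, so the only fair comparison is with the standard argument of Castagnoli--Massey--Schoeller--von Seemann. Your column decomposition $c=\sum_{i=0}^{l-1}x^{i}c_{i}(x^{l})$, the reduction to the one-variable claim ``exact $(v-1)$-valuation $t$ and degree $<p^{e}$ imply $w_{H}\geq P_{t}$'', and the deduction of $d_{H}(\mathcal{C})\geq\min_{t}P_{t}\,d_{H}(\overline{\mathcal{C}}_{t})$ from it are all correct, and your $e=1$ Vandermonde step is fine. One repairable caveat: your induction for the key claim peels off the \emph{lowest} $p$-adic digit, and there the inherited information on the transformed columns $H_{s}=\sum_{r\geq s}\binom{r}{s}F_{r}$ consists only of valuation \emph{inequalities} (valuation $\geq t'+1$ for $s<t_{0}$, $\geq t'$ for $s>t_{0}$), to which neither your Vandermonde step nor the inductive hypothesis (which needs \emph{exact} valuation) applies; the induction closes cleanly if you peel the \emph{top} digit instead: writing $f=\sum_{r}v^{rp^{e-1}}G_{r}(v)$, $K_{i}=\sum_{r}\binom{r}{i}G_{r}$ and $t=t_{e-1}p^{e-1}+\tau$, exact valuation $t$ forces $K_{i}=0$ for $i<t_{e-1}$ and exact valuation $\tau$ for $K_{t_{e-1}}$, so every coordinate in the support of $K_{t_{e-1}}$ meets at least $t_{e-1}+1$ of the $G_{r}$, giving $w_{H}(f)\geq(t_{e-1}+1)\,w_{H}(K_{t_{e-1}})\geq(t_{e-1}+1)P_{\tau}=P_{t}$.

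The genuine gap is the reverse inequality, and you flag it yourself but do not close it: ``I expect this to be carried out by starting from $\overline{c}$ and adjusting each column within its weight-$P_{t}$ family'' is a hope, not a construction, and the CRT-interpolation mechanism you invoke does not by itself control weight and divisibility simultaneously. The missing idea is the Frobenius ($p^{e}$-th power) trick. Let $\overline{c}(x)$ be a minimum-weight codeword of $\overline{\mathcal{C}}_{t}$ and set $c(x)=(x^{l}-1)^{t}\,\overline{c}(x)^{p^{e}}\ ({\rm mod}\ x^{n}-1)$. Membership: for each $i$ with $e_{i}>t$ we have $m_{i}\mid\overline{g}_{t}\mid\overline{c}$, hence $m_{i}^{p^{e}}\mid\overline{c}^{\,p^{e}}$, so $m_{i}^{t+p^{e}}\mid c$ with $t+p^{e}\geq e_{i}$; for $e_{i}\leq t$ the factor $(x^{l}-1)^{t}$ already supplies $m_{i}^{e_{i}}$; since $g\mid x^{n}-1$, divisibility survives reduction. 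Weight: $\overline{c}(x)^{p^{e}}=\sum_{j'}\overline{c}_{j'}^{\,p^{e}}x^{p^{e}j'}$, and the exponents $lj+p^{e}j'$ with $0\leq j\leq t<p^{e}$, $0\leq j'<l$ are pairwise distinct modulo $n=lp^{e}$ (reduce modulo $p^{e}$ and modulo $l$ separately, using ${\rm gcd}(l,p)=1$), so there are no collisions and $w_{H}(c)=w_{H}\left((x^{l}-1)^{t}\right)\cdot w_{H}(\overline{c})=P_{t}\,d_{H}(\overline{\mathcal{C}}_{t})$. Incidentally this codeword has exactly the column shape you predicted (each nonzero column is a monomial shift $\alpha\,v^{k}(v-1)^{t}$ with varying $k$), but producing it requires this Frobenius idea rather than column-by-column adjustment. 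Without it, your proposal establishes only $d_{H}(\mathcal{C})\geq\min_{t}P_{t}\,d_{H}(\overline{\mathcal{C}}_{t})$, not the equality (\ref{eqdistance}).
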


In the sequel, we recall the result of Lemma $3$ in \cite{ML1}, which will be used in Theorem \ref{thmMDS3p10p}.

\begin{lem}{\rm(\!\!\cite{ML1}\,)}\label{lemdH}
Let $\mathcal{C}$ be a repeated-root cyclic code of length $lp^{e}$ over $\mathbb{F}_{q}$ and $c(x)=\left(x^{l}-1\right)^{t}v(x)$ a codeword in $\mathcal{C}$ with Hamming weight $d_{H}(\mathcal{C})$, where $l$ and $e$ are positive integers with ${\rm gcd}\left(l,\,p\right)=1$, $0\leq t\leq p^{e}-1$ and $\left(x^{l}-1\right)\nmid v(x)$.
Then
\begin{equation*}
  w_{H}\left(c(x)\right)=P_{t}\cdot N_{v}
\end{equation*}
where $P_{t}$ is defined as $\left(\ref{eqpt}\right)$ in Lemma \ref{lemdistance} and $N_{v}=w_{H}\left(v(x)\,{\rm mod}\left(x^{l}-1\right)\right)$.
\end{lem}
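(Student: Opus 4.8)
The plan is to exploit the coprimality $\gcd(l,p)=1$ in order to split the weight of $c(x)$ into $l$ independent pieces indexed by the residues of exponents modulo $l$, to bound each piece, and then to pin the total down by a matching upper bound coming from Lemma~\ref{lemdistance}. First I would decompose $v(x)$ according to residues modulo $l$: write $v(x)=\sum_{r=0}^{l-1}x^{r}v_{r}(x^{l})$, so that, using the Frobenius identity $(x^{l}-1)^{t}=\prod_{i}(x^{lp^{i}}-1)^{t_{i}}$ (a polynomial in $x^{l}$), one gets $c(x)=\sum_{r=0}^{l-1}x^{r}\bigl[(x^{l}-1)^{t}v_{r}(x^{l})\bigr]$. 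Since $\deg c<lp^{e}$, no reduction modulo $x^{lp^{e}}-1$ intervenes, the $l$ summands occupy disjoint residue classes modulo $l$, and hence $w_{H}(c)=\sum_{r=0}^{l-1}w_{H}\bigl((y-1)^{t}v_{r}(y)\bigr)$, where the weights on the right are those of genuine polynomials in $y=x^{l}$. Reducing $v$ modulo $x^{l}-1$ amounts to setting $y=1$, so $v(x)\bmod(x^{l}-1)=\sum_{r}v_{r}(1)x^{r}$ and therefore $N_{v}=\#\{\,r:\,v_{r}(1)\neq 0\,\}$.

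For the upper bound I would show that $\overline{v}(x):=v(x)\bmod(x^{l}-1)$ is a nonzero codeword of $\overline{\mathcal C}_{t}$. Since $c\in\mathcal C=\langle\prod_{i}m_{i}^{e_{i}}\rangle$, each $m_{j}^{e_{j}}$ divides $(x^{l}-1)^{t}v$; as $m_{j}\mid x^{l}-1$ occurs with multiplicity exactly $t$ in $(x^{l}-1)^{t}$, whenever $e_{j}>t$ we must have $m_{j}\mid v$, and because $\overline{v}\equiv v\pmod{m_{j}}$ this forces $m_{j}\mid\overline{v}$. Thus $\overline{g}_{t}=\prod_{e_{j}>t}m_{j}$ divides $\overline{v}$, i.e.\ $\overline{v}\in\overline{\mathcal C}_{t}$, and $\overline{v}\neq 0$ because $(x^{l}-1)\nmid v$. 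Consequently $d_{H}(\overline{\mathcal C}_{t})\le w_{H}(\overline{v})=N_{v}$, and Lemma~\ref{lemdistance} gives $w_{H}(c)=d_{H}(\mathcal C)\le P_{t}\,d_{H}(\overline{\mathcal C}_{t})\le P_{t}N_{v}$.

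The lower bound is where the real work lies. I would isolate the following claim: if a nonzero polynomial has $(y-1)$-multiplicity \emph{exactly} $t$, then its Hamming weight is at least $P_{t}$. Granting this, the summands with $v_{r}(1)\neq0$ (equivalently $(y-1)\nmid v_{r}$, so that $(y-1)^{t}v_{r}$ has multiplicity exactly $t$) each contribute weight $\ge P_{t}$, while the remaining summands contribute $\ge 0$; summing over the $N_{v}$ indices of the first group yields $w_{H}(c)\ge P_{t}N_{v}$. Combined with the upper bound, this forces $w_{H}(c)=P_{t}N_{v}$ (and, incidentally, that all the other summands vanish).

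The main obstacle is precisely the isolated claim, because Hamming weights are not multiplicative and because the naive bound ``multiplicity $\ge t$'' only yields $\min_{s\ge t}P_{s}$, which can be strictly smaller than $P_{t}$; so it is essential that the multiplicity be exactly $t$. I would prove it by induction on the radix-$p$ digits of $t$ through the factorization $(y-1)^{t}=\prod_{i}(y^{p^{i}}-1)^{t_{i}}$, the crux being the single-factor estimate that a nonzero multiple of $(z-1)^{a}$ with $0\le a\le p-1$ has weight at least $a+1=P_{a}$. In odd characteristic this base case rests on the fact that a low-weight polynomial can vanish at $1$ only to an order that is $p$-adically constrained --- for instance a weight-two polynomial $c_{1}z^{a_{1}}+c_{2}z^{a_{2}}$ has $(z-1)$-multiplicity either $0$ or $p^{\,v_{p}(a_{2}-a_{1})}$ --- so that an exact multiplicity $t$ cannot be attained below weight $P_{t}$. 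Controlling the interaction of the factors $(y^{p^{i}}-1)^{t_{i}}$ at different scales in the inductive step is the delicate point, and is the part I expect to require the most care.
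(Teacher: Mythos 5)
Your overall architecture is exactly the intended one (the paper itself gives no proof of Lemma~\ref{lemdH}: it is quoted from \cite{ML1}, where the argument is precisely this squeeze). Decomposing by residues modulo $l$ so that $w_{H}(c)=\sum_{r}w_{H}\left((y-1)^{t}v_{r}(y)\right)$, showing $\overline{v}=v\bmod\left(x^{l}-1\right)$ is a nonzero codeword of $\overline{\mathcal{C}}_{t}$ so that minimality of $c$ and Lemma~\ref{lemdistance} give $w_{H}(c)=d_{H}(\mathcal{C})\le P_{t}\,d_{H}\left(\overline{\mathcal{C}}_{t}\right)\le P_{t}N_{v}$, and matching this against the lower bound $w_{H}(c)\ge P_{t}N_{v}$ is the standard route; your remark that exactness of the multiplicity is indispensable (a bare divisibility hypothesis only yields $\min_{s\ge t}P_{s}$) is also the correct diagnosis.

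The genuine gap is your ``isolated claim,'' which you never actually prove, and whose announced base case is false as stated: $(z-1)^{p}=z^{p}-1$ is a nonzero multiple of $(z-1)^{a}$ for every $a\le p-1$, yet has weight $2<a+1$ as soon as $a\ge 2$. So even in the single-digit case the hypothesis must be that the $(z-1)$-valuation is \emph{exactly} $a$; and your weight-two computation (valuations of binomials are $0$ or powers of $p$) only excludes $w=2$, saying nothing about weights $3\le w\le a$. Likewise the inductive step across the factors $\left(y^{p^{i}}-1\right)^{t_{i}}$, which you concede you have not controlled, is where the real content lies. The claim you need --- a nonzero polynomial whose $(z-1)$-valuation is exactly $t$ has weight at least $P_{t}=\prod_{i}\left(t_{i}+1\right)$ --- is precisely the classical theorem of Massey, Costello and Justesen \cite{MCJ} (a reference that sits, uncited, in this paper's bibliography), and it is the engine behind Lemma~\ref{lemdistance} in \cite{CMSS} as well as behind the proof in \cite{ML1}. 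Citing it closes the gap and makes your argument coincide with the source's; as a self-contained proof, however, your proposal is incomplete exactly at its crux.
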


In this paper, we will employ repeated-root cyclic codes to construct new MDS symbol-pair codes.
The following two lemmas will be applied in our later proof.

\begin{lem}{\rm(\!\cite{CLL})}\label{lemMDS}
Let $\mathcal{C}$ be an $[n,\,k,\,d_{H}]$ constacyclic code over $\mathbb{F}_{q}$ with $2\leq d_{H}<n$.
Then we have $d_{p}(\mathcal{C})\geq d_{H}+2$ if and only if $\mathcal{C}$ is not an MDS code, i.e., $k<n-d_{H}+1$.
\end{lem}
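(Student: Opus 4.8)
The plan is to reduce everything to a single combinatorial identity for the symbol-pair weight, together with a degree comparison involving the generator polynomial. First I would record the key observation that for any nonzero $\mathbf{c}\in\mathbb{F}_q^n$ with $w_H(\mathbf{c})<n$ one has $w_p(\mathbf{c})=w_H(\mathbf{c})+b(\mathbf{c})$, where $b(\mathbf{c})\ge 1$ is the number of maximal cyclic runs of consecutive nonzero coordinates in the support of $\mathbf{c}$. This follows by counting the all-zero pairs: a maximal run of zeros of length $g$ produces exactly $g-1$ indices $i$ with $(c_i,c_{i+1})=(0,0)$, so summing over the $b(\mathbf{c})$ gaps yields $n-w_p(\mathbf{c})=(n-w_H(\mathbf{c}))-b(\mathbf{c})$. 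In particular $w_p(\mathbf{c})\ge w_H(\mathbf{c})+1$ whenever $w_H(\mathbf{c})<n$, with equality exactly when the support of $\mathbf{c}$ is a single cyclic interval; this recovers the bound $d_p(\mathcal{C})\ge d_H+1$ and shows that the presence of a minimum-weight codeword with single-interval support already forces $d_p(\mathcal{C})=d_H+1$.

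Since $d_p(\mathcal{C})\ge d_H+1$ always holds, the claim is equivalent to the statement that $\mathcal{C}$ is MDS if and only if $d_p(\mathcal{C})=d_H+1$. For the MDS direction I would use the generator polynomial directly. Because $\deg g(x)=n-k$ and MDS means $d_H=n-k+1$, we get $\deg g(x)=d_H-1$, so $g(x)$ has at most $d_H$ nonzero coefficients; as $g(x)$ is itself a codeword, its Hamming weight is at least $d_H$, forcing $w_H(g(x))=d_H$ with all of $g_0,\dots,g_{d_H-1}$ nonzero. Hence the support of $g(x)$ is the single cyclic interval $\{0,1,\dots,d_H-1\}$ (a proper interval since $d_H<n$), giving $w_p(g(x))=d_H+1$ and therefore $d_p(\mathcal{C})=d_H+1$.

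For the converse I would argue directly that if $\mathcal{C}$ is not MDS, i.e. $\deg g(x)=n-k\ge d_H$, then every nonzero codeword $\mathbf{c}$ satisfies $w_p(\mathbf{c})\ge d_H+2$. If $d_H+1\le w_H(\mathbf{c})<n$, this is immediate from $w_p(\mathbf{c})\ge w_H(\mathbf{c})+1$. The full-support case $w_H(\mathbf{c})=n$ is handled by first noting that $d_H<n$ forces $k\ge 2$, since a one-dimensional constacyclic code has generator polynomial $\left(x^n-\eta\right)/\left(x-\alpha\right)$ with all coefficients nonzero and hence $d_H=n$; thus $d_H\le n-k\le n-2$ and $w_p(\mathbf{c})=n\ge d_H+2$. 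The remaining, and main, case is a minimum-weight codeword with $w_H(\mathbf{c})=d_H$, where I must rule out $b(\mathbf{c})=1$. If its support were a single cyclic interval, then applying the constacyclic shift $\tau_\eta$ repeatedly moves it to the interval $\{0,1,\dots,d_H-1\}$ without changing the weight, producing a codeword whose polynomial has degree $d_H-1$; but every codeword is divisible by $g(x)$, which has degree $n-k\ge d_H>d_H-1$, a contradiction. Hence $b(\mathbf{c})\ge 2$ and $w_p(\mathbf{c})\ge d_H+2$.

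The step I expect to be the main obstacle is this last one: linking the purely combinatorial single-run condition to the algebra of the generator polynomial. The shift-to-a-low-degree-representative trick is what makes the contradiction work, and it genuinely uses constacyclicity (shifts remain in the code and preserve weight) rather than mere linearity. I would also take care over the boundary cases $w_H(\mathbf{c})=n$ and small $k$, which is why establishing $k\ge 2$ under the hypothesis $d_H<n$ is a necessary preliminary.
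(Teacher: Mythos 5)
Your proposal is correct and takes essentially the same approach as the proof of this lemma in \cite{CLL}, which the paper cites rather than reproves: the run-decomposition identity $w_{p}(\mathbf{c})=w_{H}(\mathbf{c})+b(\mathbf{c})$, the constacyclic shift of a single-run minimum-weight codeword to a nonzero polynomial of degree $d_{H}-1$, contradicting divisibility by $g(x)$ of degree $n-k\geq d_{H}$ when the code is not MDS, and, for the other direction, the observation that $g(x)$ itself is a single-run codeword of weight $d_{H}$ whenever the code is MDS. Your handling of the boundary cases (full-weight codewords and $k\geq 2$) is also sound.
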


\begin{lem}{\rm(\!\cite{CLL})}\label{lemdh3}
Let $\mathcal{C}$ be an $\left[lp^e,\,k,\,d_{H}\right]$ repeated-root cyclic code over $\mathbb{F}_{q}$ and $g(x)$ the generator polynomial of $\mathcal{C}$, where ${\rm gcd}\left(l,p\right)=1$ and $l,\,e>1$.
If $d_{H}(\mathcal{C})$ is prime and one of the following two conditions is satisfied

(1) $l<d_{H}(\mathcal{C})<lp^e-k$;

(2) $x^l-1$ is a divisor of $g(x)$ and $2<d_{H}(\mathcal{C})<lp^e-k$,
\\[2mm]
then $d_{p}\left(\mathcal{C}\right)\geq d_{H}(\mathcal{C})+3$.
\end{lem}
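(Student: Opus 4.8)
The plan is to first secure the weaker bound $d_{p}(\mathcal{C})\ge d_{H}+2$ and then rule out equality. Under either hypothesis we have $d_{H}<lp^{e}-k$, hence $k<lp^{e}-d_{H}+1$, so $\mathcal{C}$ is not an MDS code; since $2\le d_{H}<lp^{e}$, Lemma \ref{lemMDS} gives $d_{p}(\mathcal{C})\ge d_{H}+2$. It then remains to show that no codeword has symbol-pair weight exactly $d_{H}+2$. The bookkeeping device I would use is that for any codeword $c$ with $w_{H}(c)<lp^{e}$ one has $w_{p}(c)=w_{H}(c)+b(c)$, where $b(c)$ counts the cyclic runs (maximal blocks of consecutive nonzero coordinates) of $c$: the indices contributing to $w_{p}$ are exactly $\mathrm{supp}(c)\cup(\mathrm{supp}(c)-1)$, which adds one index per block. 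Thus a codeword with $w_{p}(c)=d_{H}+2$ and $w_{H}(c)\ge d_{H}$ falls into one of two cases: (A) $w_{H}(c)=d_{H}$ with $b(c)=2$, or (B) $w_{H}(c)=d_{H}+1$ with $b(c)=1$.

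Case (B) is disposed of by a degree argument. A single block of length $d_{H}+1$ may be cyclically shifted to a codeword $f(x)$ supported on $\{0,1,\dots,d_{H}\}$, so $\deg f=d_{H}$ and $f(0)\ne0$; being a polynomial of degree $<lp^{e}$ lying in $\langle g\rangle$, it is divisible by $g(x)$ as a genuine polynomial. But $\deg g=lp^{e}-k>d_{H}=\deg f$ under either hypothesis, forcing $f=0$, a contradiction. The same argument also excludes a one-block minimum-weight word, so the real work is to eliminate Case (A): a minimum-weight codeword with exactly two blocks.

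For Case (A) I would apply Lemma \ref{lemdH}, writing $c=(x^{l}-1)^{t}v$ with $(x^{l}-1)\nmid v$ and $d_{H}=P_{t}\cdot N_{v}$. Since $d_{H}$ is prime, either $P_{t}=1,\ N_{v}=d_{H}$, or $P_{t}=d_{H},\ N_{v}=1$. The first possibility forces $t=0$, which is impossible under hypothesis (2) because $(x^{l}-1)\mid g(x)\mid c$ gives $t\ge1$, and impossible under hypothesis (1) because $N_{v}=w_{H}\!\bigl(v\bmod(x^{l}-1)\bigr)\le l<d_{H}$. This leaves the decisive possibility $P_{t}=d_{H}$, $N_{v}=1$, where I would exploit the factorization $x^{lp^{e}}-1=(x^{l}-1)^{p^{e}}$ to decompose $c$ by residues modulo $l$: grouping the coordinates of class $r$ into a length-$p^{e}$ section and setting $y=x^{l}$, multiplication by $(x^{l}-1)^{t}$ acts on each section as multiplication by $(y-1)^{t}$, so the $r$-section equals $(y-1)^{t}\tilde{v}^{(r)}(y)$ in $\mathbb{F}_{q}[y]/(y^{p^{e}}-1)$. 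Because $N_{v}=1$, i.e.\ $v\equiv\alpha x^{j}\pmod{x^{l}-1}$, one checks $\tilde{v}^{(r)}(1)=0$ for $r\ne j$ and $\tilde{v}^{(j)}(1)\ne0$; hence the $j$-section is a nonzero word of $\langle(y-1)^{t}\rangle$ of weight $\ge P_{t}$, while every other section lies in $\langle(y-1)^{t+1}\rangle$. Summing sectionwise weights and using $w_{H}(c)=P_{t}$ forces all sections with $r\ne j$ to vanish, so $\mathrm{supp}(c)$ lies in the single residue class $j$ modulo $l$. As $l\ge2$, consecutive support positions differ by at least $l$, so each nonzero coordinate is its own block and $b(c)=w_{H}(c)=d_{H}\ge3$, contradicting $b(c)=2$.

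I expect the main obstacle to be exactly this last step, namely controlling the block structure of a two-block minimum-weight word. The decisive input is the section decomposition coming from $x^{lp^{e}}-1=(x^{l}-1)^{p^{e}}$, which turns the reducible factor $(x^{l}-1)$ into the prime-power factor $(y-1)$ on each length-$p^{e}$ section and lets me invoke the minimum-distance estimate $w_{H}\bigl((y-1)^{t}w\bigr)\ge P_{t}$ one coordinate-class at a time. Everything else—the two-case split, the degree argument for Case (B), and the primality dichotomy—is routine once the identity $w_{p}=w_{H}+b$ is in hand. With both cases excluded, $d_{p}(\mathcal{C})\ne d_{H}+2$, so $d_{p}(\mathcal{C})\ge d_{H}+3$.
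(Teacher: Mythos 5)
The paper never proves this lemma: it is quoted from \cite{CLL}, so there is no in-paper proof to compare against, and your proposal has to be judged on its own merits. On those merits it is structurally correct and, as far as I can tell, follows the same route as the cited source: the reduction via Lemma \ref{lemMDS} to excluding symbol-pair weight exactly $d_H+2$, the identity $w_p=w_H+b$ with the resulting two-case split, the degree argument killing the one-block case (a nonzero codeword of degree $d_H<\deg g$ cannot be a multiple of $g$), and the factorization $c=(x^l-1)^tv$ with the primality dichotomy $\{P_t,N_v\}=\{1,d_H\}$, where hypotheses (1) and (2) are used exactly where they should be: to exclude $P_t=1$, and to guarantee $d_H\ge 3$ for the final block count. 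The residue-class (section) decomposition that concentrates the support of a two-block minimum-weight word into one class modulo $l$ is the right mechanism.

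The one step you must restate with care is the key inequality, because as phrased it is literally false. A nonzero word of $\langle(y-1)^t\rangle\subseteq\mathbb{F}_q[y]/(y^{p^e}-1)$ need \emph{not} have weight at least $P_t$: by Lemma \ref{lemdistance} (case $l=1$) the minimum distance of this code is ${\rm min}\left\{P_s\,\big|\,t\le s\le p^e-1\right\}$, which can be smaller than $P_t$; for instance $p=3$, $e=2$, $t=2$ gives $P_2=3$ while $(y-1)^3=y^3-1$ has weight $2$. So your phrase ``minimum-distance estimate'' is the wrong justification, and the inequality also does not follow from Lemma \ref{lemdH}, which concerns only minimum-weight codewords. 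What you need is: if $(y-1)\nmid w$, then $w_H\bigl((y-1)^tw \bmod (y^{p^e}-1)\bigr)\ge P_t$. This is the weight-retaining property of Massey--Costello--Justesen \cite{MCJ} (equivalently, the key inequality inside the proof in \cite{CMSS}): since $y^{p^e}-1=(y-1)^{p^e}$, the canonical representative of $(y-1)^tw$ is again a genuine polynomial product $(y-1)^tw''$ with $(y-1)\nmid w''$, and such a product has weight at least $w_H\bigl((y-1)^t\bigr)=P_t$. Your application is otherwise sound precisely because you verify $\tilde v^{(j)}(1)\ne 0$, i.e.\ the $j$-section lies in $\langle(y-1)^t\rangle\setminus\langle(y-1)^{t+1}\rangle$; with the appeal rewritten as an invocation of \cite{MCJ} (and the trivial observation that a full-weight codeword cannot have $w_p=d_H+2$ here), the proof is complete.
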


\section{Constructions of MDS Symbol-Pair Codes}

In this section, for $n=3p$, we propose two new classes of MDS symbol-pair codes from repeated-root cyclic codes by analyzing the system of certain linear equations over $\mathbb{F}_{p}$.
Interestingly, the minimum symbol-pair distance of these codes ranges in $\{10,\,12\}$, which is bigger than all the known codes in Table \ref{Tab-constacyclic-MDS}.
For preparation, we define the following notations.

Let $n$ and $A_{i}$ be positive integers for any $1\leq i\leq n$ and
\begin{equation*}
  \mathcal{V}\left(A_{1},\cdots,A_{n}\right)=\left(A_{1}\,{\rm mod\,3},\cdots,A_{n}\,{\rm mod\,3}\right).
\end{equation*}
Denote by $\mathcal{CS}\left(A_{1},\cdots,A_{n}\right)=\left(a_{1},\cdots,a_{n}\right)$ the rearrangement of $\mathcal{V}\left(A_{1},\cdots,A_{n}\right)$ with $a_{i}\leq a_{j}$ for any $i<j$.
For instance, $\mathcal{CS}\left(5,10,4\right)=\left(1,1,2\right)$.

Now we present a class of MDS symbol-pair codes with length $3p$ and minimum symbol-pair distance $10$.

\begin{thm}\label{thmMDS3p10p}
Let $p$ be an odd prime with $3\,|\left(p-1\right)$.
Then there exists an MDS $\left(3p,\,10\right)_{p}$ symbol-pair code.
\end{thm}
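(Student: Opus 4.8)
The plan is to realize the required code as a repeated-root cyclic code of length $3p$ and to control its symbol-pair distance through the run structure of its codewords. Since $3\mid(p-1)$, the field $\mathbb{F}_p$ contains a primitive cube root of unity $\omega$, so over $\mathbb{F}_p$ we have $x^{3p}-1=(x^3-1)^p=(x-1)^p(x-\omega)^p(x-\omega^2)^p$. I would take $\mathcal{C}=\langle g(x)\rangle$ with $g(x)=(x-1)^5(x-\omega)^2(x-\omega^2)$, a cyclic code of length $3p$ and dimension $k=3p-8$, the eight vanishing conditions $c^{[j]}(1)=0\,(0\le j\le 4)$, $c(\omega)=c^{[1]}(\omega)=0$, $c(\omega^2)=0$ cutting it out exactly.

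First I would pin down the Hamming distance via Lemma \ref{lemdistance} with $l=3$, $e=1$ (so $P_t=t+1$). The auxiliary codes $\overline{\mathcal{C}}_t$ are the length-$3$ simple-root cyclic codes whose generator collects the factors of multiplicity exceeding $t$, and their minimum distances are $1,2,3,\infty$ according as $0,1,2,3$ of the roots $1,\omega,\omega^2$ survive. Minimizing $P_t\cdot d_H(\overline{\mathcal{C}}_t)$ over $0\le t\le p-1$ gives $d_H(\mathcal{C})=6$; indeed $(5,2,1)$ is the unique multiset of multiplicities summing to $8$ for which this minimum is as large as $6$, which is exactly why this generator is chosen (a larger $d_H$ is impossible, a smaller one leaves too big a gap to close). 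Since $\dim\mathcal{C}=3p-8$, the Singleton bound (Lemma \ref{Singleton}) immediately yields $d_p(\mathcal{C})\le (3p)-(3p-8)+2=10$, so everything reduces to proving $d_p(\mathcal{C})\ge 10$.

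For the lower bound I would use the identity $w_p(\mathbf{c})=w_H(\mathbf{c})+\rho(\mathbf{c})$, where $\rho(\mathbf{c})$ is the number of cyclic runs of the support (valid since $w_H\le 8<n$). Because Lemma \ref{lemMDS} already gives $d_p\ge d_H+2=8$, and any codeword with $w_H\ge 9$ automatically satisfies $w_p\ge w_H+1\ge 10$, it suffices to exclude codewords with $w_p\in\{8,9\}$, i.e. the configurations $(w_H,\rho)\in\{(7,1),(8,1),(6,2),(7,2),(6,3)\}$. The single-run cases $(7,1)$ and $(8,1)$ die at once: after a cyclic shift such a codeword equals $x^s h(x)$ with $\deg h\le 7<8=\deg g$ and $g$ coprime to $x$, forcing $h=0$.

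The remaining cases $(6,2)$, $(7,2)$, $(6,3)$ are the heart of the matter, and I expect them to be the main obstacle. Here the support sits in two or three short arcs of total length at most $7$, and $g\mid c$ becomes a homogeneous linear system in the $\le 7$ nonzero coefficients; the column attached to a support position $i$ is $\bigl(\binom{i}{0},\dots,\binom{i}{4},\,\omega^i,\,i\,\omega^i,\,\omega^{2i}\bigr)^{\!\top}$ over $\mathbb{F}_p$, so the coefficient matrix has a confluent Vandermonde shape (five coincident rows at the root $1$). The task is to show that, for positions confined to so few short runs, this system has only the trivial solution, i.e. that the relevant $6\times 6$ and $7\times 7$ minors are nonzero in $\mathbb{F}_p$; this forces a run-by-run analysis of the $\bmod\,p$ behaviour of the binomials $\binom{i}{j}$ and of the cube-root weights $\omega^i$, ruling out every admissible gap pattern uniformly in $p$. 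Once this nonvanishing is established, no codeword of symbol-pair weight below $10$ survives, whence $d_p(\mathcal{C})=10$; combined with $k=3p-8=3p-d_p+2$, the code meets the Singleton bound and is an MDS $(3p,10)_p$ symbol-pair code.
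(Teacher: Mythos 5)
Your framework is sound and runs parallel to the paper's: a repeated-root cyclic code of length $3p$ with a degree-$8$ generator, $d_H$ computed via Lemma \ref{lemdistance}, the Singleton bound giving $d_p\le 10$, the identity $w_p=w_H+\rho$ to enumerate dangerous configurations, and the degree argument killing the single-run cases. But the proof has a genuine gap exactly where you place the phrase ``the heart of the matter'': the configurations $(w_H,\rho)\in\{(6,2),(7,2),(6,3)\}$ are never actually excluded. Stating that the relevant $6\times 6$ and $7\times 7$ minors of a confluent Vandermonde system ``must be shown nonzero'' and that this ``forces a run-by-run analysis ruling out every admissible gap pattern uniformly in $p$'' is a statement of the problem, not a solution — this case analysis \emph{is} the theorem, and it occupies essentially the whole of the paper's proof. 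It is also not a uniform nonvanishing argument: which subset of the eight vanishing conditions produces a contradiction depends delicately on the residues modulo $3$ of the run positions; in some residue patterns the natural minors do vanish and one must pass to higher-derivative conditions, and one must separately handle degenerate positions $l\equiv 0,\pm 1 \pmod{p}$ (the paper's subcase $\mathcal{CS}(l_1,l_2)=(0,0)$, which needs $c^{(2)}$ and $c^{(3)}$ and the observation $p\nmid l_1 l_2(l_2-l_1)$, is an instance of both complications).

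Two further points compound the gap. First, you chose a different generator, $g(x)=(x-1)^5(x-\omega)^2(x-\omega^2)$, so your vanishing conditions are distributed $5+2+1$ over the roots rather than the paper's $4+2+2$ coming from $(x-1)^4(x-\omega)^2(x-\omega^2)^2$. Consequently none of the paper's computations can be cited to close your gap: every run pattern would have to be re-analyzed for your system, and whether your code really attains $d_p=10$ is left unproved (it is plausible — e.g.\ for the two-runs-of-three pattern the $5+2+1$ conditions do force $a_i=0$ — but plausibility is not a proof). Second, your motivating claim that a generator with $d_H=5$ ``leaves too big a gap to close'' is false: the paper's code has exactly $d_H=5$ and dimension $3p-8$, and the gap is closed by Lemma \ref{lemdH}, which shows any weight-$5$ codeword has the form $(x^3-1)^4 v(x)$ with $N_v=1$, hence support in an arithmetic progression of difference $3$ and pair weight $10$. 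So the design space is larger than you assert, and the choice of multiplicities $(5,2,1)$ versus $(4,2,2)$ is a trade-off in which cases must be excluded, not a forced move.
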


\begin{proof}
Let $\mathcal{C}$ be a repeated-root cyclic code of length $3p$ over $\mathbb{F}_p$ with generator polynomial
\begin{equation*}
  g(x)=\left(x-1\right)^{4}\left(x-\omega\right)^2\left(x-\omega^2\right)^2
\end{equation*}
where $\omega$ is a primitive third root of unity in $\mathbb{F}_{p}$.

Note that Lemma \ref{lemdistance} yields that $\mathcal{C}$ is a $\left[3p,\,3p-8,\,5\right]$ cyclic code.
Precisely, recall that $\overline{g}_{t}(x)$ is the generator polynomial of $\overline{\mathcal{C}}_{t}$.
If $t\in \{0,\,1\}$, then $\overline{g}_{t}(x)=x^{3}-1$ and
\begin{equation*}
  P_{t}\cdot d_{H}\left(\overline{\mathcal{C}}_{t}\right)=\infty.
\end{equation*}
If $t=2$, then $\overline{g}_{2}(x)=x-1$ and
\begin{equation*}
  P_{2}\cdot d_{H}\left(\overline{\mathcal{C}}_{2}\right)=3\cdot 2=6.
\end{equation*}
If $t=3$, then $\overline{g}_{3}(x)=x-1$ and
\begin{equation*}
  P_{3}\cdot d_{H}\left(\overline{\mathcal{C}}_{3}\right)=4\cdot 2=8.
\end{equation*}
If $4\leq t\leq p-1$, then $\overline{g}_{t}(x)=1$ and
\begin{equation*}
  P_{t}\cdot d_{H}\left(\overline{\mathcal{C}}_{t}\right)=t+1\geq 5.
\end{equation*}
Due to the equality $\left(\ref{eqdistance}\right)$, one immediately has $d_{H}(\mathcal{C})=5$.

Since $\left(x^3-1\right)\,\big|\,g(x)$ and $2<5=d_{H}(\mathcal{C})<3p-(3p-8)=8$, by Lemma \ref{lemdh3}, one gets $d_{p}(\mathcal{C})\geq 8$.
Suppose that there exists a codeword $f(x)$ in $\mathcal{C}$ with Hamming weight $7$ such that $f(x)$ has $7$ consecutive nonzero entries.
Denote
\begin{equation*}
  f(x)=f_{0}+f_{1}\,x+f_{2}\,x^{2}+f_{3}\,x^{3}+f_{4}\,x^{4}
  +f_{5}\,x^{5}+f_{6}\,x^{6}
\end{equation*}
where $f_{i}\in \mathbb{F}_{p}^{*}$ for any $0\leq i\leq 6$.
It follows that the degree of $g(x)$ is less than or equal to the degree of $f(x)$, i.e., $8\leq 6$, which is impossible.
Hence there does not exist a codeword in $\mathcal{C}$ with Hamming weight $7$ and symbol-pair weight $8$.
Similarly, it can be verified that there does not exist a codeword in $\mathcal{C}$ with Hamming weight $8$ and symbol-pair weight $9$.

In the sequel, we claim that there does not exist a codeword in $\mathcal{C}$ with Hamming weight $5$ and symbol-pair weight $8$ (or $9$).
Let $c(x)$ be a codeword in $\mathcal{C}$ with Hamming weight $5$.
Assume that $c(x)$ has factorization $c(x)=\left(x^{3}-1\right)^{t}v(x)$,
where $0\leq t\leq p-1$, $\left(x^{3}-1\right)\nmid v(x)$ and $v(x)=v_{0}(x^{3})+x\,v_{1}(x^{3})+x^{2}\,v_{2}(x^{3})$.
Then by Lemma \ref{lemdH}, one can conclude that
\begin{equation*}
  5=w_{H}\left(\left(x^{3}-1\right)^{t}\right)\cdot w_{H}\left(v(x)\,{\rm mod}\left(x^{3}-1\right)\right)=\left(1+t\right)N_{v}
\end{equation*}
where $N_{v}=w_{H}\left(v(x)\,{\rm mod}\left(x^{3}-1\right)\right)$.
It follows that $\left(N_{v},\,t\right)=\left(1,\,4\right)$, which implies that the symbol-pair weight of $c(x)$ cannot be $8$ (or $9$).

In order to derive that $\mathcal{C}$ is an MDS $\left(3p,\,10\right)_{p}$ symbol-pair code, we need to prove that there does not exist a codeword $c(x)$ in $\mathcal{C}$ with $(w_{H}(c(x)),w_{p}(c(x)))=(6,\,8)$, $(6,\,9)$ or $(7,\,9)$.

Firstly, on the contrary, suppose that $c(x)$ is a codeword in $\mathcal{C}$ with Hamming weight $6$ and symbol-pair weight $8$.
Then its certain cyclic shift must have the form
\begin{equation*}
  \left(\star,\,\star,\,\star,\,\star,\,\star,\,\mathbf{0},\,\star,\,\mathbf{0}\right),
\end{equation*}
\begin{equation*}
  \left(\star,\,\star,\,\star,\,\star,\,\mathbf{0},\,\star,\,\star,\,\mathbf{0}\right)
\end{equation*}
or
\begin{equation*}
  \left(\star,\,\star,\,\star,\,\mathbf{0},\,\star,\,\star,\,\star,\,\mathbf{0}\right).
\end{equation*}
Without loss of generality, in this paper, we always suppose that the first coordinate of a codeword is $1$.
\begin{itemize}
  \item
  For the subcase of $\left(\star,\,\star,\,\star,\,\star,\,\star,\,\mathbf{0},\,\star,\,\mathbf{0}\right)$.
  Let $c(x)=1+a_{1}\,x+a_{2}\,x^{2}+a_{3}\,x^{3}+a_{4}\,x^{4}+a_{5}\,x^{l}$ with $6\leq l\leq 3p-2$ and $a_{i}\in \mathbb{F}_p^{*}$ for any $1\leq i\leq5$.
  \begin{itemize}
    \item
    If $\mathcal{V}\left(l\right)\in \{0,\,1\}$, then by $c\left(1\right)=c\left(\omega\right)=c\left(\omega^2\right)=0$, one can immediately obtain
    \begin{equation*}
    \left\{
    \begin{array}{l}
      1+a_{1}+a_{2}+a_{3}+a_{4}+a_{5}=0,\\[2mm]
      1+a_{1}\,\omega+a_{2}\,\omega^2+a_{3}+a_{4}\,\omega+a_{5}\,\omega^{l}=0,\\[2mm]
      1+a_{1}\,\omega^2+a_{2}\,\omega+a_{3}+a_{4}\,\omega^2+a_{5}\,\omega^{2l}=0.\\
    \end{array}
    \right.
    \end{equation*}
    This leads to $a_{2}=0$, a contradiction.

    \item
    If $\mathcal{V}\left(l\right)=2$, then $c^{\left(1\right)}\left(1\right)=c^{\left(1\right)}\left(\omega\right)=c^{\left(1\right)}\left(\omega^2\right)=0$ indicates $a_{3}=0$, a contradiction.
  \end{itemize}

  \item
  For the subcase of $\left(\star,\,\star,\,\star,\,\star,\,\mathbf{0},\,\star,\,\star,\,\mathbf{0}\right)$.
  Let $c(x)=1+a_{1}\,x+a_{2}\,x^{2}+a_{3}\,x^{3}+a_{4}\,x^{l}+a_{5}\,x^{l+1}$ with $5\leq l\leq 3p-3$ and $a_{i}\in \mathbb{F}_p^{*}$ for any $1\leq i\leq5$.
  \begin{itemize}
    \item
    If $\mathcal{V}\left(l\right)\in \{0,\,2\}$, then it follows from $c\left(1\right)=c\left(\omega\right)=c\left(\omega^2\right)=0$ that $a_{1}=0$ or $a_{2}=0$, a contradiction.

    \item
    If $\mathcal{V}\left(l\right)=1$, then by $c^{\left(1\right)}\left(1\right)=c^{\left(1\right)}\left(\omega\right)=c^{\left(1\right)}\left(\omega^2\right)=0$, one immediately has $a_{3}=0$, a contradiction.
  \end{itemize}

  \item
  For the subcase of $\left(\star,\,\star,\,\star,\,\mathbf{0},\,\star,\,\star,\,\star,\,\mathbf{0}\right)$.
  Let $c(x)=1+a_{1}\,x+a_{2}\,x^{2}+a_{3}\,x^{l}+a_{4}\,x^{l+1}+a_{5}\,x^{l+2}$ with $4\leq l\leq 3p-4$ and $a_{i}\in \mathbb{F}_p^{*}$ for any $1\leq i\leq5$.
  Then for any $4\leq l\leq 3p-4$, by $c^{\left(1\right)}\left(1\right)=c^{\left(1\right)}\left(\omega\right)=c^{\left(1\right)}\left(\omega^2\right)=0$, one can derive $a_{i}=0$ for some $3\leq i\leq5$, a contradiction.
\end{itemize}

Secondly, assume that there exists a codeword $c(x)$ in $\mathcal{C}$ with Hamming weight $6$ and symbol-pair weight $9$.
There are three subcases to be considered:
\begin{itemize}
  \item
  For the subcase of $c(x)=1+a_{1}\,x+a_{2}\,x^{2}+a_{3}\,x^{3}+a_{4}\,x^{l_{1}}+a_{5}\,x^{l_{2}}$ with $5\leq l_{1}<l_{2}\leq 3p-2$ and $a_{i}\in \mathbb{F}_p^{*}$ for any $1\leq i\leq5$.
  \begin{itemize}
    \item
    If $\mathcal{CS}\left(l_{1},\,l_{2}\right)=(1,\,2)$, then by $c^{\left(1\right)}\left(1\right)=c^{\left(1\right)}\left(\omega\right)=c^{\left(1\right)}\left(\omega^2\right)=0$, one can obtain that $a_{3}=0$, a contradiction.

    \item
    If $\mathcal{CS}\left(l_{1},\,l_{2}\right)\ne(1,\,2)$, then $c\left(1\right)=c\left(\omega\right)=c\left(\omega^2\right)=0$ indicates that $a_{1}=0$ or $a_{2}=0$, a contradiction.
  \end{itemize}

  \item
  For the subcase of $c(x)=1+a_{1}\,x+a_{2}\,x^{2}+a_{3}\,x^{l_{1}}+a_{4}\,x^{l_{1}+1}+a_{5}\,x^{l_{2}}$ with $4\leq l_{1}<l_{2}\leq 3p-2$ and $a_{i}\in \mathbb{F}_p^{*}$ for any $1\leq i\leq5$.
  \begin{itemize}
    \item
    If $\mathcal{V}\left(l_{1},\,l_{2}\right)\in\{(0,\,2),\,(1,\,0),\,(2,\,1)\}$, then by $c^{\left(1\right)}\left(1\right)=c^{\left(1\right)}\left(\omega\right)=c^{\left(1\right)}\left(\omega^2\right)=0$, one can immediately derive $a_{i}=0$ for some $3\leq i\leq5$, a contradiction.

    \item
    If $\mathcal{V}\left(l_{1},\,l_{2}\right)\notin\{(0,\,2),\,(1,\,0),\,(2,\,1)\}$, then $c\left(1\right)=c\left(\omega\right)=c\left(\omega^2\right)=0$ implies that $1=0$ or $a_{i}=0$ for some $i\in\{1,\,2\}$, a contradiction.
  \end{itemize}

  \item
  For the subcase of $c(x)=1+a_{1}\,x+a_{2}\,x^{l_{1}}+a_{3}\,x^{l_{1}+1}+a_{4}\,x^{l_{2}}+a_{5}\,x^{l_{2}+1}$ with $3\leq l_{1}<l_{2}\leq 3p-3$ and $a_{i}\in \mathbb{F}_p^{*}$ for any $1\leq i\leq5$.
  \begin{itemize}
    \item
    If $\mathcal{CS}\left(l_{1},\,l_{2}\right)=(0,\,0)$, then one can deduce that $p\nmid\, l_{1}\,l_{2}\left(l_{2}-l_{1}\right)$.
    By $c\left(1\right)=c\left(\omega\right)=c\left(\omega^2\right)=0$, one immediately gets
    \begin{equation*}
      1+a_{2}+a_{4}=a_{1}+a_{3}+a_{5}=0.
    \end{equation*}
    It follows from $c^{\left(1\right)}\left(1\right)=c^{\left(1\right)}\left(\omega\right)=c^{\left(1\right)}\left(\omega^2\right)=0$ that
    \begin{equation*}
      l_{1}\,a_{2}+l_{2}\,a_{4}=l_{1}\,a_{3}+l_{2}\,a_{5}=0.
    \end{equation*}
    Then one can conclude that $a_{2}+a_{3}=a_{4}+a_{5}=0$ due to $c^{\left(2\right)}\left(1\right)=0$.
    Hence the fact $c^{\left(3\right)}\left(1\right)=0$ indicates that $l_{2}\left(l_{2}-l_{1}\right)a_{5}=0$, a contradiction.

    \item
    If $\mathcal{CS}\left(l_{1},\,l_{2}\right)=(1,\,2)$, then by $c^{\left(1\right)}\left(1\right)=c^{\left(1\right)}\left(\omega\right)=c^{\left(1\right)}\left(\omega^2\right)=0$, one can immediately deduce $a_{3}=0$ or $a_{5}=0$, a contradiction.

    \item
    If $\mathcal{CS}\left(l_{1},\,l_{2}\right)\notin\{(0,\,0),\,(1,\,2)\}$, then $c\left(1\right)=c\left(\omega\right)=c\left(\omega^2\right)=0$ yields that $1=0$ or $a_{i}=0$ for some $1\leq i\leq5$, a contradiction.
  \end{itemize}
\end{itemize}

Thirdly, suppose that $c(x)$ is a codeword in $\mathcal{C}$ with Hamming weight $7$ and symbol-pair weight $9$.
Then we ought to discuss the following three subcases:
\begin{itemize}
  \item
  For the subscase of $c(x)=1+a_{1}\,x+a_{2}\,x^{2}+a_{3}\,x^{3}+a_{4}\,x^{4}+a_{5}\,x^{5}+a_{6}\,x^{l}$ with $7\leq l\leq 3p-2$ and $a_{i}\in \mathbb{F}_p^{*}$ for any $1\leq i\leq6$.
  \begin{itemize}
    \item
    If $\mathcal{V}\left(l\right)=0$, then by $c\left(1\right)=c\left(\omega\right)=c\left(\omega^2\right)=0$, one can immediately obtain $a_{1}+a_{4}=0$.
    It follows from $c^{\left(1\right)}\left(1\right)=c^{\left(1\right)}\left(\omega\right)=c^{\left(1\right)}\left(\omega^2\right)=0$ that $a_{1}+4\,a_{4}=0$.
    Hence $3\,a_{4}=0$, a contradiction.

    \item
    If $\mathcal{V}\left(l\right)\in \{1,\,2\}$, then $c^{\left(1\right)}\left(1\right)=c^{\left(1\right)}\left(\omega\right)=c^{\left(1\right)}\left(\omega^2\right)=0$ implies that $a_{3}=0$, a contradiction.
    \end{itemize}

  \item
  For the subcase of $c(x)=1+a_{1}\,x+a_{2}\,x^{2}+a_{3}\,x^{3}+a_{4}\,x^{4}+a_{5}\,x^{l}+a_{6}\,x^{l+1}$ with $6\leq l\leq 3p-3$ and $a_{i}\in \mathbb{F}_p^{*}$ for any $1\leq i\leq6$.
  \begin{itemize}
    \item
    If $\mathcal{V}\left(l\right)\in\{0,\,1\}$, then by $c^{\left(1\right)}\left(1\right)=c^{\left(1\right)}\left(\omega\right)=c^{\left(1\right)}\left(\omega^2\right)=0$, one has $a_{2}=0$ or $a_{3}=0$, a contradiction.

    \item
    If $\mathcal{V}\left(l\right)=2$, then $c\left(1\right)=c\left(\omega\right)=c\left(\omega^2\right)=c^{\left(1\right)}\left(1\right)=c^{\left(1\right)}\left(\omega\right)=c^{\left(1\right)}\left(\omega^2\right)=0$ indicates that $a_{4}=0$, a contradiction.
    \end{itemize}

  \item
  For the subcase of $c(x)=1+a_{1}\,x+a_{2}\,x^{2}+a_{3}\,x^{3}+a_{4}\,x^{l}+a_{5}\,x^{l+1}+a_{6}\,x^{l+2}$ with $5\leq l\leq 3p-4$ and $a_{i}\in \mathbb{F}_p^{*}$ for any $1\leq i\leq6$.
  For any $5\leq l\leq 3p-4$, $c\left(1\right)=c\left(\omega\right)=c\left(\omega^2\right)=c^{\left(1\right)}\left(1\right)
  =c^{\left(1\right)}\left(\omega\right)=c^{\left(1\right)}\left(\omega^2\right)=0$ yields that $a_{i}=0$ for some $4\leq i\leq6$, a contradiction.
\end{itemize}

Consequently, $\mathcal{C}$ is an MDS $\left(3p,\,10\right)_{p}$ symbol-pair code.
This completes the proof.
\end{proof}

In what follows, we construct a class of MDS symbol-pair codes with length $3p$ and minimum symbol-pair distance $12$, which is the maximum minimum symbol-pair distance for all known MDS symbol-pair codes from constacyclic codes.

\begin{thm}\label{thmMDS3p12p}
Let $p$ be an odd prime with $3\,|\left(p-1\right)$.
Then there exists an MDS $\left(3p,\,12\right)_{p}$ symbol-pair code.
\end{thm}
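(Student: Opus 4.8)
The plan is to follow the architecture of Theorem~\ref{thmMDS3p10p}, replacing the generator by one of degree $10$ so that the Singleton bound caps $d_p$ at $12$. I would take the repeated-root cyclic code $\mathcal{C}=\langle g(x)\rangle$ of length $3p$ over $\mathbb{F}_p$ with
\[
  g(x)=\left(x-1\right)^{5}\left(x-\omega\right)^{3}\left(x-\omega^{2}\right)^{2},
\]
$\omega$ being a primitive third root of unity in $\mathbb{F}_p$. Running the computation of Lemma~\ref{lemdistance} and using that $\overline{\mathcal{C}}_t$ (a cyclic code of length $3$) has minimum Hamming distance $1,2,3$ or $\infty$ according as $0,1,2$ or $3$ of the factors $x-1,x-\omega,x-\omega^2$ survive in $\overline{g}_t$, one checks that $\min_{0\le t\le p-1}P_t\cdot d_H(\overline{\mathcal{C}}_t)=6$, attained \emph{only} at $t=5$ (where $\overline{g}_5=1$ and $P_5=6$). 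Hence $\mathcal{C}$ is a $\left[3p,\,3p-10,\,6\right]$ cyclic code. As $k=3p-10<3p-5$, the code is not Hamming-MDS, so Lemma~\ref{lemMDS} gives $d_p(\mathcal{C})\ge 8$, while Lemma~\ref{Singleton} yields $d_p(\mathcal{C})\le 3p-(3p-10)+2=12$. Everything reduces to proving $w_p(c)\ge 12$ for all nonzero $c\in\mathcal{C}$.

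Write $b(c)$ for the number of cyclic runs of consecutive nonzero coordinates of $c$, so that $w_p(c)=w_H(c)+b(c)$ and $b(c)\ge 1$. Three reductions do most of the work. First, any codeword with $w_H(c)\ge 11$ satisfies $w_p(c)\ge w_H(c)+1\ge 12$. Second, as in Theorem~\ref{thmMDS3p10p}, Lemma~\ref{lemdH} shows that a weight-$6$ codeword must have $\left(N_v,t\right)=\left(1,5\right)$, so $c=(x^3-1)^5v$ with $v$ a monomial modulo $x^3-1$; its support then lies in a single residue class modulo $3$, all six nonzeros are isolated, and $w_p(c)=12$. Third, a degree obstruction: if all nonzeros of $c$ lie in a single run of length $w$, then a cyclic shift of $c$ is a nonzero polynomial of degree $w-1$ divisible by $g(x)$, which is impossible for $w\le 10$ since $\deg g=10$; thus $b(c)\ge 2$ whenever $w_H(c)\le 10$. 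The second reduction settles $w_H=6$, and the third settles $w_H=10$ (then $w_p=10+b(c)\ge 12$) together with all single-run configurations.

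After these reductions only the configurations $\left(w_H(c),b(c)\right)\in\{(7,2),(7,3),(7,4),(8,2),(8,3),(9,2)\}$ remain, each split into the compositions of $w_H(c)$ into $b(c)$ runs. For each such shape I would normalize the leading coordinate to $1$ and write $c(x)=1+\sum_i a_i x^{l_i}$ with all $a_i\in\mathbb{F}_p^{*}$, then impose the ten vanishing conditions carried by $g(x)$: the Hasse derivatives $c^{(s)}(1)=0$ for $0\le s\le 4$, $c^{(s)}(\omega)=0$ for $0\le s\le 2$, and $c^{(s)}(\omega^2)=0$ for $0\le s\le 1$. Since $\alpha^{l_i}$ for $\alpha\in\{1,\omega,\omega^2\}$ depends only on $l_i\bmod 3$, the residue data $\mathcal{V}(l_1,\dots)$ and $\mathcal{CS}(l_1,\dots)$ determine the coefficient matrix of each resulting linear system, exactly as in Theorem~\ref{thmMDS3p10p}; the high multiplicity of $(x-1)$ supplies the consecutive derivatives $c^{(0)}(1),\dots,c^{(4)}(1)$ that act as a Vandermonde-type obstruction. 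In every case the system should force some $a_i=0$ (or $1=0$), contradicting $a_i\in\mathbb{F}_p^{*}$, whence $d_p(\mathcal{C})=12$.

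I expect the main obstacle to be the volume and bookkeeping of this final case analysis rather than any isolated difficulty: there are substantially more shapes than in Theorem~\ref{thmMDS3p10p}, and for the denser configurations such as $(9,2)$ one must select the right subset of the ten conditions and track residues modulo $3$ and binomial coefficients modulo $p$ simultaneously. The delicate point is to verify that the relevant binomials $\binom{l_i}{s}$ and the arising Vandermonde determinants are nonzero in $\mathbb{F}_p$ for every admissible exponent pattern, which is where the hypotheses $3\mid(p-1)$ and $p\ge 7$ enter; granting this, each system is inconsistent and $\mathcal{C}$ is the desired MDS $\left(3p,\,12\right)_p$ symbol-pair code.
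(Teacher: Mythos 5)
Your construction and overall architecture coincide with the paper's: the same generator polynomial $g(x)=(x-1)^{5}(x-\omega)^{3}(x-\omega^{2})^{2}$, the same parameter computation $[3p,\,3p-10,\,6]$ via Lemma \ref{lemdistance}, the bound $d_{p}(\mathcal{C})\geq 8$ via Lemma \ref{lemMDS}, the degree-of-$g$ obstruction for single-run codewords, and the list of residual configurations to kill. Your preliminary reductions are correct, and your treatment of weight-$6$ codewords is in fact cleaner than the paper's: using Lemma \ref{lemdH} to force $(N_{v},t)=(1,5)$, so that the support lies in one residue class modulo $3$ and $w_{p}=12$ exactly, disposes of all of $(6,8)$, $(6,9)$, $(6,10)$, $(6,11)$ at once, whereas the paper handles $(6,8)$, $(6,9)$ by citing the case analysis of Theorem \ref{thmMDS3p10p} (legitimate, since this code is a subcode of that one) and runs a fresh case analysis for $(6,10)$ and $(6,11)$. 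Your residual list $(w_{H},w_{p})\in\{(7,9),(7,10),(7,11),(8,10),(8,11),(9,11)\}$ matches the paper's up to these swaps.

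The genuine gap is that the residual case analysis, which is essentially the entire content of the theorem and occupies several pages of the paper, is not carried out, and the mechanism you propose for closing it fails in exactly the hardest cases. You assert that for each exponent pattern the ten vanishing conditions "should force some $a_{i}=0$ (or $1=0$)" because the relevant Vandermonde-type determinants and binomials are nonzero, and you explicitly say you would grant this. But for Hamming weight $9$ with two runs (the $(9,11)$ configurations) the linear system in the $a_{i}$ is \emph{consistent}: the paper's proof there solves for all $a_{i}$ as explicit functions of the run position $l$ and $\omega$ using the conditions at orders $s\leq 2$, then uses $c^{(3)}(1)=0$ to determine $l$ itself (e.g.\ $l=2-3\,\omega^{2}$ or $l=-\omega^{2}-4\,\omega$), and only then extracts a contradiction such as $-5\,\omega=0$, $3\,\omega^{2}=0$, $\omega^{2}=\omega$, or $3=0$ from $c^{(4)}(1)=0$ (or $c^{(3)}(1)=0$). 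The contradiction lives in the over-determination of the integer exponent $l$, not in the nonvanishing of a determinant in the $a_{i}$ alone, so the blanket principle you rely on is false as stated. Relatedly, the coefficient matrices are not determined by the residue data $\mathcal{V}(l_{1},\dots)$ alone: the derivative conditions involve the exponents $l_{i}$ as elements of $\mathbb{F}_{p}$, and one repeatedly needs case-specific facts like $p\nmid l_{2}(l_{2}-l_{1})$ for admissible exponents. So what you have is a correct plan with correct (indeed slightly streamlined) reductions, but not a proof: the decisive computations are missing, and the shortcut you propose for them would not go through.
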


\begin{proof}
Let $\mathcal{C}$ be a repeated-root cyclic code of length $3p$ over $\mathbb{F}_p$ with generator polynomial
\begin{equation*}
  g(x)=\left(x-1\right)^{5}\left(x-\omega\right)^{3}\left(x-\omega^{2}\right)^{2}
\end{equation*}
where $\omega$ is a primitive third root of unity in $\mathbb{F}_{p}$.
It follows from Lemma \ref{lemdistance} that $\mathcal{C}$ is a $[3p,\,3p-10,\,6]$ code.
Since $\mathcal{C}$ is not MDS, by Lemma \ref{lemMDS}, one can get $d_{p}(\mathcal{C})\geq 8$.
With a similar manner to the proof of Theorem \ref{thmMDS3p10p}, it can be verified that there does not exist a codeword $c(x)$ in $\mathcal{C}$ with $(w_{H}(c(x)),w_{p}(c(x)))=(9,\,10)$ or $(10,\,11)$.
Besides, the proof of Theorem \ref{thmMDS3p10p} yields that there does not exist a codeword $c(x)$ in $\mathcal{C}$ with $(w_{H}(c(x)),w_{p}(c(x)))=(6,\,8)$, $(6,\,9)$, $(7,\,8)$, $(7,\,9)$ or $(8,\,9)$.

To determine that $\mathcal{C}$ is an MDS $\left(3p,\,12\right)_{p}$ symbol-pair code, it is sufficient to derive that there does not exist a codeword $c(x)$ in $\mathcal{C}$ with $(w_{H}(c(x)),w_{p}(c(x)))=(6,\,10)$, (6,\,11)$, (7,\,10)$, $(7,\,11)$, $(8,\,10)$, $(8,\,11)$ or $(9,\,11)$.

Firstly, we suppose that $c(x)$ is a codeword in $\mathcal{C}$ with Hamming weight $6$ and symbol-pair weight $10$.
Without loss of generality, we just consider the following two subcases:
\begin{itemize}
  \item
  For the subcase of $c(x)=1+a_{1}\,x+a_{2}\,x^{2}+a_{3}\,x^{l_{1}}+a_{4}\,x^{l_{2}}+a_{5}\,x^{l_{3}}$ with $4\leq l_{1}<l_{2}<l_{3}\leq 3p-2$ and $a_{i}\in \mathbb{F}_p^{*}$ for any $1\leq i\leq5$.
  \begin{itemize}
    \item
    If $\mathcal{CS}\left(l_{1},\,l_{2},\,l_{3}\right)=(0,\,1,\,2)$, then it can be checked that by $c\left(1\right)=c\left(\omega\right)=c\left(\omega^2\right)=0$, one can immediately get $1=0$ or $a_{i}=0$ for some $1\leq i\leq 5$, a contradiction.

    \item
    If $\mathcal{CS}\left(l_{1},\,l_{2},\,l_{3}\right)\ne(0,\,1,\,2)$, then $c\left(1\right)=c\left(\omega\right)=c\left(\omega^2\right)=c^{\left(1\right)}\left(1\right)
    =c^{\left(1\right)}\left(\omega\right)=c^{\left(1\right)}\left(\omega^2\right)=0$ yields that $a_{i}=0$ for some $1\leq i\leq5$, a contradiction.
  \end{itemize}

  \item
  For the subcase of $c(x)=1+a_{1}\,x+a_{2}\,x^{l_{1}}+a_{3}\,x^{l_{1}+1}+a_{4}\,x^{l_{2}}+a_{5}\,x^{l_{3}}$ with $3\leq l_{1}<l_{2}<l_{3}\leq 3p-2$ and $a_{i}\in \mathbb{F}_p^{*}$ for any $1\leq i\leq5$.
  \begin{itemize}
    \item
    If $\mathcal{V}\left(l_{1},\,l_{2},\,l_{3}\right)\in\left\{(0,\,0,\,0),\,(0,\,1,\,1),\,(0,\,2,\,2)
    ,\,(1,\,0,\,2),\,(1,\,2,\,0),\,(2,\,1,\,2),\,(2,\,2,\,1)\right\}$, then by $c\left(1\right)=c\left(\omega\right)=c\left(\omega^2\right)=c^{\left(1\right)}\left(1\right)=c^{\left(1\right)}\left(\omega\right)=c^{\left(1\right)}\left(\omega^2\right)=0$, it follows that $a_{i}=0$ for some $1\leq i\leq 5$, a contradiction.

    \item
    If $\mathcal{V}\left(l_{1},\,l_{2},\,l_{3}\right)\in\left\{(0,\,1,\,0),\,(0,\,0,\,1)\right\}$, then by $c^{\left(1\right)}\left(1\right)=c^{\left(1\right)}\left(\omega\right)=c^{\left(1\right)}\left(\omega^2\right)=c^{\left(2\right)}\left(1\right)=c^{\left(2\right)}\left(\omega\right)=0$, one can immediately obtain $a_{4}=0$ or $a_{5}=0$ since     $p\nmid\,l_{3}\left(l_{3}-l_{1}\right)$ or $p\nmid\,l_{2}\left(l_{2}-l_{1}\right)$.
    This leads to a contradiction.

    \item
    For other conditions, it can be verified that $c\left(1\right)=c\left(\omega\right)=c\left(\omega^2\right)=0$ indicates  $1=0$ or $a_{i}=0$ for some $1\leq i\leq 5$, which is impossible.
  \end{itemize}
\end{itemize}

Secondly, we assume that $c(x)$ is a codeword in $\mathcal{C}$ with Hamming weight $6$ and symbol-pair weight $11$.
Let $c(x)=1+a_{1}\,x+a_{2}\,x^{l_{1}}+a_{3}\,x^{l_{2}}+a_{4}\,x^{l_{3}}+a_{5}\,x^{l_{4}}$ with $3\leq l_{1}<l_{2}<l_{3}<l_{4}\leq 3p-2$ and $a_{i}\in \mathbb{F}_p^{*}$ for any $1\leq i\leq5$.
\begin{itemize}
  \item
  If $\mathcal{CS}\left(l_{1},\,l_{2},\,l_{3},\,l_{4}\right)\in\left\{(0,\,0,\,0,\,1),\,
  (0,\,1,\,1,\,1),\,(0,\,1,\,2,\,2)\right\}$, then by $c\left(1\right)=c\left(\omega\right)=c\left(\omega^2\right)=c^{\left(1\right)}\left(1\right)=c^{\left(1\right)}\left(\omega\right)=c^{\left(1\right)}\left(\omega^2\right)=0$ indicates that $a_{i}=0$ for some $2\leq i\leq 5$, a contradiction.

  \item
  If $\mathcal{CS}\left(l_{1},\,l_{2},\,l_{3},\,l_{4}\right)=(0,\,0,\,1,\,1)$, then by $c^{\left(1\right)}\left(1\right)=c^{\left(1\right)}\left(\omega\right)=c^{\left(1\right)}\left(\omega^2\right)=c^{\left(2\right)}\left(1\right)=c^{\left(2\right)}\left(\omega\right)=0$, one has $a_{i}=0$ for some $2\leq i\leq 5$, a contradiction.

  \item
  If $\mathcal{CS}\left(l_{1},\,l_{2},\,l_{3},\,l_{4}\right)\notin\{(0,\,1,\,1,\,1),\,(0,\,1,\,2,\,2),\,(0,\,0,\,1,\,1),\,(0,\,0,\,0,\,1)\}$, then $c\left(1\right)=c\left(\omega\right)=c\left(\omega^2\right)=0$ yields that $1=0$ or $a_{i}=0$ for some $1\leq i\leq 5$, a contradiction.
\end{itemize}

Thirdly, we suppose that $c(x)$ is a codeword in $\mathcal{C}$ with Hamming weight $7$ and symbol-pair weight $10$.
There are five subcases to be discussed:
\begin{itemize}
  \item
  For the subcase of $c(x)=1+a_{1}\,x+a_{2}\,x^{2}+a_{3}\,x^{3}+a_{4}\,x^{4}+a_{5}\,x^{l_{1}}+a_{6}\,x^{l_{2}}$ with $6\leq l_{1}<l_{2}\leq 3p-2$ and $a_{i}\in \mathbb{F}_p^{*}$ for any $1\leq i\leq6$.
  \begin{itemize}
    \item
    If $\mathcal{CS}\left(l_{1},\,l_{2}\right)\in\left\{(0,\,0),\,(0,\,1),\,(1,\,1)\right\}$, then by $c\left(1\right)=c\left(\omega\right)=c\left(\omega^2\right)=0$, one can immediately deduce $a_{2}=0$, a contradiction.

    \item
    If $\mathcal{CS}\left(l_{1},\,l_{2}\right)\in\left\{(1,\,2),\,(2,\,2)\right\}$, then $c^{\left(1\right)}\left(1\right)=c^{\left(1\right)}\left(\omega\right)=c^{\left(1\right)}\left(\omega^2\right)=0$ yields that $a_{3}=0$, a contradiction.

    \item
    If $\mathcal{CS}\left(l_{1},\,l_{2}\right)=(0,\,2)$, then by $c\left(1\right)=c\left(\omega\right)=c\left(\omega^2\right)=c^{\left(1\right)}\left(1\right)=c^{\left(1\right)}\left(\omega\right)=c^{\left(1\right)}\left(\omega^2\right)=0$, one has $a_{4}=0$, a contradiction.
  \end{itemize}

  \item
  For the subcase of $c(x)=1+a_{1}\,x+a_{2}\,x^{2}+a_{3}\,x^{3}+a_{4}\,x^{l_{1}}+a_{5}\,x^{l_{1}+1}+a_{6}\,x^{l_{2}}$ with $6\leq (l_{1}+1)<l_{2}\leq 3p-2$ and $a_{i}\in \mathbb{F}_p^{*}$ for any $1\leq i\leq6$.
  \begin{itemize}
    \item
    If $\mathcal{V}\left(l_{1},\,l_{2}\right)\in\left\{(0,\,0),\,(0,\,1),\,(2,\,0),\,(2,\,2)\right\}$, then $c\left(1\right)=c\left(\omega\right)=c\left(\omega^2\right)=0$ implies that $a_{1}=0$ or $a_{2}=0$, a contradiction.

    \item
    If $\mathcal{V}\left(l_{1},\,l_{2}\right)\in\left\{(0,\,2),\,(1,\,0),\,(1,\,1),\,(1,\,2),\,(2,\,1)\right\}$, then by $c\left(1\right)=c\left(\omega\right)=c\left(\omega^2\right)=c^{\left(1\right)}\left(1\right)=c^{\left(1\right)}\left(\omega\right)=c^{\left(1\right)}\left(\omega^2\right)=0$, one can immediately get $a_{i}=0$ for some $4\leq i\leq6$, a contradiction.
  \end{itemize}

  \item
  For the subcase of $c(x)=1+a_{1}\,x+a_{2}\,x^{2}+a_{3}\,x^{l_{1}}+a_{4}\,x^{l_{1}+1}+a_{5}\,x^{l_{1}+2}+a_{6}\,x^{l_{2}}$ with $6\leq (l_{1}+2)<l_{2}\leq 3p-2$ and $a_{i}\in \mathbb{F}_p^{*}$ for any $1\leq i\leq6$.
  For any $6\leq (l_{1}+2)<l_{2}\leq 3p-2$, it follows from $c\left(1\right)=c\left(\omega\right)=c\left(\omega^2\right)=c^{\left(1\right)}\left(1\right)
  =c^{\left(1\right)}\left(\omega\right)=c^{\left(1\right)}\left(\omega^2\right)=0$ that $a_{4}=0$ or $a_{5}=0$, a contradiction.

  \item
  For the subcase of $c(x)=1+a_{1}\,x+a_{2}\,x^{2}+a_{3}\,x^{l_{1}}+a_{4}\,x^{l_{1}+1}+a_{5}\,x^{l_{2}}+a_{6}\,x^{l_{2}+1}$ with $5\leq (l_{1}+1)<l_{2}\leq 3p-3$ and $a_{i}\in \mathbb{F}_p^{*}$ for any $1\leq i\leq6$.
  \begin{itemize}
    \item
    If $\mathcal{CS}\left(l_{1},\,l_{2}\right)\in\left\{(0,\,0),\,(1,\,1),\,(2,\,2)\right\}$, then $c\left(1\right)=c\left(\omega\right)=c\left(\omega^2\right)=0$ indicates that $1=0$ or $a_{i}=0$ for some $1\leq i\leq2$, a contradiction.

    \item
    If $\mathcal{CS}\left(l_{1},\,l_{2}\right)\in\left\{(0,\,1),\,(0,\,2),\,(1,\,2)\right\}$, then by $c\left(1\right)=c\left(\omega\right)=c\left(\omega^2\right)=c^{\left(1\right)}\left(1\right)=c^{\left(1\right)}\left(\omega\right)=c^{\left(1\right)}\left(\omega^2\right)=0$, one can conclude that $a_{5}=0$ or $a_{6}=0$, a contradiction.
  \end{itemize}
\end{itemize}

Fourthly, we assume that $c(x)$ is a codeword in $\mathcal{C}$ with Hamming weight $7$ and symbol-pair weight $11$.
There are three subcases to be considered:
\begin{itemize}
  \item
  For the subcase of $c(x)=1+a_{1}\,x+a_{2}\,x^{2}+a_{3}\,x^{3}+a_{4}\,x^{l_{1}}+a_{5}\,x^{l_{2}}+a_{6}\,x^{l_{3}}$ with $5\leq l_{1}<l_{2}<l_{3}\leq 3p-2$ and $a_{i}\in \mathbb{F}_p^{*}$ for any $1\leq i\leq6$.
  \begin{itemize}
    \item
    If $\mathcal{CS}\left(l_{1},\,l_{2},\,l_{3}\right)\in\left\{(0,\,1,\,2),\,
    (1,\,1,\,2),\,(1,\,2,\,2)\right\}$, then by
    $c\left(1\right)=c\left(\omega\right)=c\left(\omega^2\right)=c^{\left(1\right)}\left(1\right)=c^{\left(1\right)}\left(\omega\right)=c^{\left(1\right)}\left(\omega^2\right)=0$, one can immediately have $a_{i}=0$ for some $1\leq i\leq6$, a contradiction.

    \item
    If $\mathcal{CS}\left(l_{1},\,l_{2},\,l_{3}\right)\notin\left\{(0,\,1,\,2),\,(1,\,1,\,2),\,(1,\,2,\,2)\right\}$, then $c\left(1\right)=c\left(\omega\right)=c\left(\omega^2\right)=0$ implies that $a_{1}=0$ or $a_{2}=0$, a contradiction.
  \end{itemize}

  \item
  For the subcase of $c(x)=1+a_{1}\,x+a_{2}\,x^{2}+a_{3}\,x^{l_{1}}+a_{4}\,x^{l_{1}+1}+a_{5}\,x^{l_{2}}+a_{6}\,x^{l_{3}}$ with $5\leq (l_{1}+1)<l_{2}<l_{3}\leq 3p-2$ and $a_{i}\in \mathbb{F}_p^{*}$ for any $1\leq i\leq6$.
  \begin{itemize}
    \item
    If $\mathcal{V}\left(l_{1},\,l_{2},\,l_{3}\right)\in\left\{(1,\,1,\,1),\,(1,\,1,\,2)
    ,\,(1,\,2,\,1),\,(1,\,2,\,2)\right\}$, then by $c\left(1\right)=c\left(\omega\right)=c\left(\omega^2\right)=0$, one can derive that $1=0$, a contradiction.

    \item
    If $\mathcal{V}\left(l_{1},\,l_{2},\,l_{3}\right)\notin\left\{(1,\,1,\,1),\,(1,\,1,\,2)
    ,\,(1,\,2,\,1),\,(1,\,2,\,2)\right\}$, then $c\left(1\right)=c\left(\omega\right)=c\left(\omega^2\right)=c^{\left(1\right)}\left(1\right)=c^{\left(1\right)}\left(\omega\right)=c^{\left(1\right)}\left(\omega^2\right)=0$ induces that $a_{i}=0$ for some $1\leq i\leq6$, a contradiction.
  \end{itemize}

  \item
  For the subcase of $c(x)=1+a_{1}\,x+a_{2}\,x^{l_{1}}+a_{3}\,x^{l_{1}+1}+a_{4}\,x^{l_{2}}+a_{5}\,x^{l_{2}+1}+a_{6}\,x^{l_{3}}$ with $4\leq (l_{1}+1)<l_{2}<l_{3}\leq 3p-2$ and $a_{i}\in \mathbb{F}_p^{*}$ for any $1\leq i\leq6$.
  \begin{itemize}
    \item
    If $\mathcal{V}\left(l_{1},\,l_{2},\,l_{3}\right)\in\left\{(1,\,1,\,1),\,(1,\,1,\,2)\right\}$, then $c\left(1\right)=c\left(\omega\right)=c\left(\omega^2\right)=0$ yields $1=0$, a contradiction.

    \item
    If $\mathcal{V}\left(l_{1},\,l_{2},\,l_{3}\right)\in\left\{(0,\,0,\,0),\,(0,\,0,\,1)\right\}$, then by $c\left(1\right)=c\left(\omega\right)=c\left(\omega^2\right)
    =c^{\left(1\right)}\left(1\right)=c^{\left(1\right)}\left(\omega\right)
    =c^{\left(1\right)}\left(\omega^2\right)=c^{\left(2\right)}\left(1\right)=c^{\left(2\right)}\left(\omega\right)=0$, one can deduce that $a_{4}=0$ or $a_{5}=0$, a contradiction.

    \item
    If $\mathcal{V}\left(l_{1},\,l_{2},\,l_{3}\right)\notin\left\{(0,\,0,\,0),\,(0,\,0,\,1),\,(1,\,1,\,1),\,(1,\,1,\,2)\right\}$, then $c\left(1\right)=c\left(\omega\right)=c\left(\omega^2\right)=c^{\left(1\right)}\left(1\right)=c^{\left(1\right)}\left(\omega\right)=c^{\left(1\right)}\left(\omega^2\right)=0$ leads to $1=0$ or $a_{i}=0$ for some $1\leq i\leq6$, a contradiction.
  \end{itemize}
\end{itemize}

Fifthly, we suppose that $c(x)$ is a codeword in $\mathcal{C}$ with Hamming weight $8$ and symbol-pair weight $10$.
There are four subcases to be considered:
\begin{itemize}
  \item
  For the subcase of $c(x)=1+a_{1}\,x+a_{2}\,x^{2}+a_{3}\,x^{3}+a_{4}\,x^{4}+a_{5}\,x^{5}+a_{6}\,x^{6}+a_{7}\,x^{l}$ with $8\leq l\leq 3p-2$ and $a_{i}\in \mathbb{F}_p^{*}$ for any $1\leq i\leq7$.
  For any $8\leq l\leq 3p-2$, it can be verified that $c\left(1\right)=c\left(\omega\right)=c\left(\omega^2\right)=c^{\left(1\right)}\left(1\right)
  =c^{\left(1\right)}\left(\omega\right)=c^{\left(1\right)}\left(\omega^2\right)=0$ induces that $a_{4}=0$ or $a_{5}=0$, a contradiction.

  \item
  For the subcase of $c(x)=1+a_{1}\,x+a_{2}\,x^{2}+a_{3}\,x^{3}+a_{4}\,x^{4}+a_{5}\,x^{5}+a_{6}\,x^{l}+a_{7}\,x^{l+1}$ with $7\leq l\leq 3p-3$ and $a_{i}\in \mathbb{F}_p^{*}$ for any $1\leq i\leq7$.
  For any $7\leq l\leq 3p-3$, $c\left(1\right)=c\left(\omega\right)=c\left(\omega^2\right)=c^{\left(1\right)}\left(1\right)
  =c^{\left(1\right)}\left(\omega\right)=c^{\left(1\right)}\left(\omega^2\right)=0$ implies that $a_{i}=0$ for some $3\leq i\leq5$, a contradiction.

  \item
  For the subcase of $c(x)=1+a_{1}\,x+a_{2}\,x^{2}+a_{3}\,x^{3}+a_{4}\,x^{4}+a_{5}\,x^{l}+a_{6}\,x^{l+1}+a_{7}\,x^{l+2}$ with $6\leq l\leq 3p-4$ and $a_{i}\in \mathbb{F}_p^{*}$ for any $1\leq i\leq7$.
  For any $6\leq l\leq 3p-4$, it follows from $c\left(1\right)=c\left(\omega\right)=c\left(\omega^2\right)=c^{\left(1\right)}\left(1\right)
  =c^{\left(1\right)}\left(\omega\right)=c^{\left(1\right)}\left(\omega^2\right)=0$ that $a_{i}=0$ for some $5\leq i\leq7$, a contradiction.

  \item
  For the subcase of $c(x)=1+a_{1}\,x+a_{2}\,x^{2}+a_{3}\,x^{3}+a_{4}\,x^{l}+a_{5}\,x^{l+1}+a_{6}\,x^{l+2}+a_{7}\,x^{l+3}$ with $5\leq l\leq 3p-5$ and $a_{i}\in \mathbb{F}_p^{*}$ for any $1\leq i\leq7$.
  For any $5\leq l\leq 3p-5$, by $c\left(1\right)=c\left(\omega\right)=c\left(\omega^2\right)=c^{\left(1\right)}\left(1\right)
  =c^{\left(1\right)}\left(\omega\right)=c^{\left(1\right)}\left(\omega^2\right)=0$, one can derive that $a_{5}=0$ or $a_{6}=0$, a contradiction.
\end{itemize}

Sixthly, we suppose that $c(x)$ is a codeword in $\mathcal{C}$ with Hamming weight $8$ and symbol-pair weight $11$.
Without loss of generality, it suffices to consider the following five subcases:
\begin{itemize}
  \item
  For the subcase of $c(x)=1+a_{1}\,x+a_{2}\,x^{2}+a_{3}\,x^{3}+a_{4}\,x^{4}+a_{5}\,x^{5}+a_{6}\,x^{l_{1}}+a_{7}\,x^{l_{2}}$ with $7\leq l_{1}<l_{2}\leq 3p-2$ and $a_{i}\in \mathbb{F}_p^{*}$ for any $1\leq i\leq7$.
  \begin{itemize}
    \item
    If $\mathcal{CS}\left(l_{1},\,l_{2}\right)=(1,\,2)$, then it can be verified that by $c^{\left(1\right)}\left(1\right)=c^{\left(1\right)}\left(\omega\right)=c^{\left(1\right)}\left(\omega^2\right)=0$, one can obtain $a_{3}=0$, a contradiction.

    \item
    If $\mathcal{CS}\left(l_{1},\,l_{2}\right)\ne (1,\,2)$, then $c\left(1\right)=c\left(\omega\right)=c\left(\omega^2\right)=c^{\left(1\right)}\left(1\right)
    =c^{\left(1\right)}\left(\omega\right)=c^{\left(1\right)}\left(\omega^2\right)=0$ indicates $a_{4}=0$ or $a_{5}=0$, a contradiction.
  \end{itemize}

  \item
  For the subcase of $c(x)=1+a_{1}\,x+a_{2}\,x^{2}+a_{3}\,x^{3}+a_{4}\,x^{4}+a_{5}\,x^{l_{1}}+a_{6}\,x^{l_{1}+1}+a_{7}\,x^{l_{2}}$ with $7\leq (l_{1}+1)<l_{2}\leq 3p-2$ and $a_{i}\in \mathbb{F}_p^{*}$ for any $1\leq i\leq7$.
  \begin{itemize}
    \item
    If $\mathcal{V}\left(l_{1},\,l_{2}\right)=(1,\,2)$, then by $c^{\left(1\right)}\left(1\right)=c^{\left(1\right)}\left(\omega\right)=c^{\left(1\right)}\left(\omega^2\right)=0$, one can get $a_{3}=0$, a contradiction.

    \item
    If $\mathcal{V}\left(l_{1},\,l_{2}\right)\ne(1,\,2)$, then $c\left(1\right)=c\left(\omega\right)=c\left(\omega^2\right)=c^{\left(1\right)}\left(1\right)
    =c^{\left(1\right)}\left(\omega\right)=c^{\left(1\right)}\left(\omega^2\right)=0$ implies $a_{i}=0$ for some $1\leq i\leq7$, a contradiction.
  \end{itemize}

  \item
  For the subcase of $c(x)=1+a_{1}\,x+a_{2}\,x^{2}+a_{3}\,x^{3}+a_{4}\,x^{l_{1}}+a_{5}\,x^{l_{1}+1}+a_{6}\,x^{l_{1}+2}+a_{7}\,x^{l_{2}}$ with $7\leq (l_{1}+2)<l_{2}\leq 3p-2$ and $a_{i}\in \mathbb{F}_p^{*}$ for any $1\leq i\leq7$.
  For any $7\leq (l_{1}+2)<l_{2}\leq 3p-2$, it can be verified that by $c\left(1\right)=c\left(\omega\right)=c\left(\omega^2\right)=c^{\left(1\right)}\left(1\right)
  =c^{\left(1\right)}\left(\omega\right)=c^{\left(1\right)}\left(\omega^2\right)=0$, one can deduce $a_{i}=0$ for some $4\leq i\leq 6$, a contradiction.

  \item
  For the subcase of $c(x)=1+a_{1}\,x+a_{2}\,x^{2}+a_{3}\,x^{3}+a_{4}\,x^{l_{1}}+a_{5}\,x^{l_{1}+1}+a_{6}\,x^{l_{2}}+a_{7}\,x^{l_{2}+1}$ with $6\leq (l_{1}+1)<l_{2}\leq 3p-3$ and $a_{i}\in \mathbb{F}_p^{*}$ for any $1\leq i\leq7$.
  \begin{itemize}
    \item
    If $\mathcal{V}\left(l_{1},\,l_{2}\right)=(1,\,1)$, then  $c^{\left(1\right)}\left(1\right)=c^{\left(1\right)}\left(\omega\right)=c^{\left(1\right)}\left(\omega^2\right)=0$ yields $a_{3}=0$, a contradiction.

    \item
    If $\mathcal{V}\left(l_{1},\,l_{2}\right)\ne(1,\,1)$, then by $c\left(1\right)=c\left(\omega\right)=c\left(\omega^2\right)=c^{\left(1\right)}\left(1\right)
    =c^{\left(1\right)}\left(\omega\right)=c^{\left(1\right)}\left(\omega^2\right)=0$, one can derive $a_{i}=0$ for some $1\leq i\leq 7$, a contradiction.
  \end{itemize}

  \item
  For the subcase of $c(x)=1+a_{1}\,x+a_{2}\,x^{2}+a_{3}\,x^{l_{1}}+a_{4}\,x^{l_{1}+1}+a_{5}\,x^{l_{1}+2}+a_{6}\,x^{l_{2}}+a_{7}\,x^{l_{2}+1}$ with $6\leq (l_{1}+2)<l_{2}\leq 3p-3$ and $a_{i}\in \mathbb{F}_p^{*}$ for any $1\leq i\leq7$.
  For any $6\leq (l_{1}+2)<l_{2}\leq 3p-3$, it follows from $c\left(1\right)=c\left(\omega\right)=c\left(\omega^2\right)=c^{\left(1\right)}\left(1\right)=c^{\left(1\right)}\left(\omega\right)=c^{\left(1\right)}\left(\omega^2\right)=0$ that $a_{i}=0$ for some $1\leq i\leq 5$, a contradiction.
\end{itemize}

Finally, we assume that $c(x)$ is a codeword in $\mathcal{C}$ with Hamming weight $9$ and symbol-pair weight $11$.
There are four subcases to be discussed:
\begin{itemize}
  \item
  For the subcase of $c(x)=1+a_{1}\,x+a_{2}\,x^{2}+a_{3}\,x^{3}+a_{4}\,x^{4}+a_{5}\,x^{5}+a_{6}\,x^{6}+a_{7}\,x^{7}+a_{8}\,x^{l}$ with $9\leq l\leq 3p-2$ and $a_{i}\in \mathbb{F}_p^{*}$ for any $1\leq i\leq8$.
  \begin{itemize}
    \item
    If $\mathcal{V}(l)\in\{0,\,1\}$, then by $c\left(1\right)=c\left(\omega\right)=c\left(\omega^2\right)=c^{\left(1\right)}\left(1\right)
    =c^{\left(1\right)}\left(\omega\right)=c^{\left(1\right)}\left(\omega^2\right)=0$, one can obtain $a_{5}=0$, a contradiction.

    \item
    If $\mathcal{V}(l)=2$, then $c\left(1\right)=c\left(\omega\right)
    =c\left(\omega^2\right)=c^{\left(1\right)}\left(1\right)
    =c^{\left(1\right)}\left(\omega\right)=c^{\left(1\right)}\left(\omega^2\right)=0$ induces
    \begin{equation*}
    \left\{
    \begin{array}{l}
      1+a_{3}+a_{6}=a_{1}+a_{4}+a_{7}=a_{2}+a_{5}+a_{8}=0,\\[2mm]
      3\,a_{3}+6\,a_{6}=a_{1}+4\,a_{4}+7\,a_{7}=2\,a_{2}+5\,a_{5}+l\,a_{8}=0
    \end{array}
    \right.
    \end{equation*}
    which indicates that
    \begin{equation}\label{eq3p12-9-11-8101}
      a_{2}=\frac{l-5}{3}\,a_{8},\quad a_{3}=-2,\quad a_{4}=-2\,a_{7},\quad a_{5}=-\frac{l-2}{3}\,a_{8},\quad a_{6}=1.
    \end{equation}
    The fact $c^{\left(2\right)}\left(1\right)=c^{\left(2\right)}\left(\omega\right)=0$ indicates
    \begin{equation*}
    \left\{
    \begin{array}{l}
      2\,a_{2}+20\,a_{5}+l\left(l-1\right)a_{8}+6\,a_{3}+30\,a_{6}+12\,a_{4}+42\,a_{7}=0,\\[2mm]
      2\,a_{2}+20\,a_{5}+l\left(l-1\right)a_{8}+\left(6\,a_{3}+30\,a_{6}\right)\omega+\left(12\,a_{4}+42\,a_{7}\right)\omega^2=0\\
    \end{array}
    \right.
    \end{equation*}
    which yields
    \begin{equation*}
      a_{7}=\omega \quad {\rm and} \quad a_{8}=\frac{18\,\omega^2}{\left(l-2\right)\left(l-5\right)}
    \end{equation*}
    due to (\ref{eq3p12-9-11-8101}).
    It follows from $c^{\left(3\right)}\left(1\right)=0$ that
    \begin{equation*}
      6\,a_{3}+24\,a_{4}+60\,a_{5}+120\,a_{6}+210\,a_{7}+l\left(l-1\right)\left(l-2\right)a_{8}=0
    \end{equation*}
    which yields $l=2-3\,\omega^2$.
    By $c^{\left(4\right)}\left(1\right)=0$, one can get
    \begin{equation*}
      24\,a_{4}+120\,a_{5}+360\,a_{6}+840\,a_{7}
      +l\left(l-1\right)\left(l-2\right)\left(l-3\right)\,a_{8}=0.
    \end{equation*}
    This implies $\left(l^2+l-12\right)\omega+24=0$, which leads to
    \begin{equation*}
      \left(\left(2-3\,\omega^2\right)^2+2-3\,\omega^2-12\right)\omega+24=0.
    \end{equation*}
    Thus $-5\,\omega=0$, a contradiction.
  \end{itemize}

  \item
  For the subcase of $c(x)=1+a_{1}\,x+a_{2}\,x^{2}+a_{3}\,x^{3}+a_{4}\,x^{4}+a_{5}\,x^{5}+a_{6}\,x^{6}+a_{7}\,x^{l}+a_{8}\,x^{l+1}$ with $8\leq l\leq 3p-3$ and $a_{i}\in \mathbb{F}_p^{*}$ for any $1\leq i\leq8$.
  \begin{itemize}
    \item
    If $\mathcal{V}(l)\in\{0,\,2\}$, then by $c\left(1\right)=c\left(\omega\right)=c\left(\omega^2\right)=c^{\left(1\right)}\left(1\right)
    =c^{\left(1\right)}\left(\omega\right)=c^{\left(1\right)}\left(\omega^2\right)=0$, one can conclude $a_{4}=0$ or $a_{5}=0$, a contradiction.

    \item
    If $\mathcal{V}(l)=1$, then $c\left(1\right)=c\left(\omega\right)=c\left(\omega^2\right)=c^{\left(1\right)}\left(1\right)
    =c^{\left(1\right)}\left(\omega\right)=c^{\left(1\right)}\left(\omega^2\right)=0$ implies
    \begin{equation}\label{eq3p12-9-11-7201}
      a_{2}=\frac{l-4}{3}\,a_{8},\quad a_{3}=-2,\quad a_{4}=-\frac{l-1}{3}\,a_{7},\quad a_{5}=-\frac{l-1}{3}\,a_{8},\quad a_{6}=1.
    \end{equation}
    It can be verified by (\ref{eq3p12-9-11-7201}) and  $c^{\left(2\right)}\left(1\right)=c^{\left(2\right)}\left(\omega\right)=0$ that
    \begin{equation*}
      a_{7}=\frac{18\,\omega}{\left(l-1\right)\left(l-4\right)} \quad {\rm and} \quad a_{8}=\frac{18\,\omega^2}{\left(l-1\right)\left(l-4\right)}.
    \end{equation*}
    According to $c^{\left(3\right)}\left(1\right)=0$, one can obtain $l=-\omega^2-4\,\omega$.
    Hence $c^{\left(4\right)}\left(1\right)=0$ yields $3\,\omega^2=0$, a contradiction.
  \end{itemize}

  \item
  For the subcase of $c(x)=1+a_{1}\,x+a_{2}\,x^{2}+a_{3}\,x^{3}+a_{4}\,x^{4}+a_{5}\,x^{5}+a_{6}\,x^{l}+a_{7}\,x^{l+1}+a_{8}\,x^{l+2}$ with $7\leq l\leq 3p-4$ and $a_{i}\in \mathbb{F}_p^{*}$ for any $1\leq i\leq8$.
  \begin{itemize}
    \item
    If $\mathcal{V}(l)=0$, then by $c\left(1\right)=c\left(\omega\right)=c\left(\omega^2\right)=c^{\left(1\right)}\left(1\right)
    =c^{\left(1\right)}\left(\omega\right)=c^{\left(1\right)}\left(\omega^2\right)=0$, one can derive
    \begin{equation}\label{eq3p12-9-11-6301}
      a_{2}=\frac{l-3}{3}\,a_{8},\quad a_{3}=-\frac{l}{l-3},\quad a_{4}=-\frac{l}{3}\,a_{7},\quad a_{5}=-\frac{l}{3}\,a_{8},\quad a_{6}=\frac{3}{l-3}.
    \end{equation}
    By (\ref{eq3p12-9-11-6301}) and  $c^{\left(2\right)}\left(1\right)=c^{\left(2\right)}\left(\omega\right)=0$, one has
    \begin{equation*}
      a_{7}=\frac{3\,\omega}{l-3} \quad {\rm and} \quad a_{8}=\frac{3\,\omega^2}{l-3}.
    \end{equation*}
    Then $c^{\left(3\right)}\left(1\right)=0$ implies
    \begin{equation*}
      6\,a_{3}+24\,a_{4}+60\,a_{5}+l\left(l-1\right)\left(l-2\right)a_{6}
      +l\left(l-1\right)\left(l+1\right)a_{7}
      +l\left(l+1\right)\left(l+2\right)a_{8}=0
    \end{equation*}
    which yields $\omega^2=\omega$, a contradiction.

    \item
    If $\mathcal{V}(l)=1$, then $c\left(1\right)=c\left(\omega\right)=c\left(\omega^2\right)=c^{\left(1\right)}\left(1\right)
    =c^{\left(1\right)}\left(\omega\right)=c^{\left(1\right)}\left(\omega^2\right)=0$ indicates
    \begin{equation}\label{eq3p12-9-11-6302}
      a_{2}=\frac{l-4}{3}\,a_{7},\quad a_{3}=-\frac{l+2}{l-1},\quad a_{4}=-\frac{l-1}{3}\,a_{6},\quad a_{5}=-\frac{l-1}{3}\,a_{7},\quad a_{8}=\frac{3}{l-1}.
    \end{equation}
    It follows from (\ref{eq3p12-9-11-6302}) and  $c^{\left(2\right)}\left(1\right)=c^{\left(2\right)}\left(\omega\right)=0$ that
    \begin{equation*}
      a_{6}=\frac{3\left(l+2\right)\omega}{\left(l-1\right)\left(l-4\right)} \quad {\rm and} \quad a_{7}=\frac{3\left(l+2\right)\omega^2}{\left(l-1\right)\left(l-4\right)}.
    \end{equation*}
    By $c^{\left(3\right)}\left(1\right)=0$, one can deduce that $3\,\omega^2=0$, a contradiction.

    \item
    If $\mathcal{V}(l)=2$, then by $c\left(1\right)=c\left(\omega\right)=c\left(\omega^2\right)=c^{\left(1\right)}\left(1\right)
    =c^{\left(1\right)}\left(\omega\right)=c^{\left(1\right)}\left(\omega^2\right)=0$, one has
    \begin{equation}\label{eq3p12-9-11-6303}
      a_{2}=\frac{l-5}{3}\,a_{6},\, a_{3}=-\frac{l+1}{3}\,a_{7},\, a_{4}=-\frac{l+1}{3}\,a_{8},\, a_{5}=-\frac{l-2}{3}\,a_{6},\, a_{7}=\frac{3}{l-2}.
    \end{equation}
    Due to (\ref{eq3p12-9-11-6303}) and  $c^{\left(2\right)}\left(1\right)=c^{\left(2\right)}\left(\omega\right)=0$, one can immediately obtain that
    \begin{equation*}
      a_{6}=\frac{3\left(l+1\right)\omega^2}{\left(l-2\right)\left(l-5\right)} \quad {\rm and} \quad a_{8}=\frac{3\,\omega}{l-2}.
    \end{equation*}
    Thus $c^{\left(3\right)}\left(1\right)=0$ leads to $3=0$, a contradiction.
  \end{itemize}

  \item
  For the subcase of $c(x)=1+a_{1}\,x+a_{2}\,x^{2}+a_{3}\,x^{3}+a_{4}\,x^{4}+a_{5}\,x^{l}+a_{6}\,x^{l+1}+a_{7}\,x^{l+2}+a_{8}\,x^{l+3}$ with $6\leq l\leq 3p-5$ and $a_{i}\in \mathbb{F}_p^{*}$ for any $1\leq i\leq8$.
  \begin{itemize}
    \item
    If $\mathcal{V}(l)\in\{0,\,1\}$, then by $c\left(1\right)=c\left(\omega\right)=c\left(\omega^2\right)=c^{\left(1\right)}\left(1\right)
    =c^{\left(1\right)}\left(\omega\right)=c^{\left(1\right)}\left(\omega^2\right)=0$, one gets $a_{6}=0$ or $a_{7}=0$, a contradiction.

    \item
    If $\mathcal{V}(l)=2$, then $c\left(1\right)=c\left(\omega\right)=c\left(\omega^2\right)=c^{\left(1\right)}\left(1\right)
    =c^{\left(1\right)}\left(\omega\right)=c^{\left(1\right)}\left(\omega^2\right)=0$ induces
    \begin{equation}\label{eq3p12-9-11-5401}
      a_{2}=\frac{3}{l-2}\,a_{8},\, a_{3}=-\frac{l+1}{3}\,a_{6},\, a_{4}=-\frac{l+1}{3}\,a_{7},\, a_{5}=-\frac{l+1}{l-2}\,a_{8},\, a_{6}=\frac{3}{l-2}.
    \end{equation}
    By (\ref{eq3p12-9-11-5401}) and  $c^{\left(2\right)}\left(1\right)=c^{\left(2\right)}\left(\omega\right)=0$, one can derive
    \begin{equation*}
      a_{7}=\frac{3\,\omega}{l-2} \quad {\rm and} \quad a_{8}=\omega^2.
    \end{equation*}
    It follows from $c^{\left(3\right)}\left(1\right)=0$ that $l=-3\,\omega^2-1$.
    Then $c^{\left(4\right)}\left(1\right)=0$ yields $l\,\omega+2=-\omega-1=0$, a contradiction.
  \end{itemize}
\end{itemize}

As a result, $\mathcal{C}$ is an MDS $\left(3p,\,12\right)_{p}$ symbol-pair code.
The desired result follows.
\end{proof}

\begin{remark}
Note that if $\mathcal{C}$ is a repeated-root cyclic code of length $3p$ over $\mathbb{F}_p$ with generator polynomial
\begin{equation*}
  g(x)=\left(x-1\right)^{4}\left(x-\omega\right)^{3}\left(x-\omega^{2}\right)^2
\end{equation*}
where $\omega$ is a primitive third root of unity in $\mathbb{F}_{p}$.
Due to Theorem \ref{thmMDS3p10p}, we can conclude that $\mathcal{C}$ is an AMDS $\left(3p,\,10\right)_{p}$ symbol-pair code.
Indeed, by Lemma \ref{lemdistance}, one can immediately get $d_{H}(\mathcal{C})=5$.
Since $\left(x^3-1\right)\,\big|\,g(x)$ and $2<5=d_{H}(\mathcal{C})<3p-(3p-9)=9$, Lemma \ref{lemdh3} indicates that $d_{p}(\mathcal{C})\geq 8$.
It follows from the proof of Theorem \ref{thmMDS3p10p} that there does not exist a codeword $c(x)$ in $\mathcal{C}$ with $(w_{H}(c(x)),w_{p}(c(x)))=(5,\,8)$, $(5,\,9)$, $(6,\,8)$, $(6,\,9)$, $(7,\,8)$, $(7,\,9)$ or $(8,\,9)$.
Therefore, the inequality (\ref{eqdhdp}) shows that $\mathcal{C}$ is an AMDS $\left(3p,\,10\right)_{p}$ symbol-pair code.
\end{remark}

In what follows, we present two examples to illustrate the result in Theorems \ref{thmMDS3p10p} and \ref{thmMDS3p12p}.

\begin{example}
(1) Let $\mathcal{C}$ be a repeated-root cyclic code of length $21$ over $\mathbb{F}_{7}$ with generator polynomial
\begin{equation*}
  g(x)=\left(x-1\right)^{4}\left(x-2\right)^2\left(x-2^2\right)^2.
\end{equation*}
By the computation software MAGMA, it can be verified that $\mathcal{C}$ is a $[21,\,13,\,5]$ code and the minimum symbol-pair distance of $\mathcal{C}$ is $10$, which coincides with our result in Theorem \ref{thmMDS3p10p}.

(2) Let $\mathcal{C}$ be a repeated-root cyclic code of length $21$ over $\mathbb{F}_{7}$ with generator polynomial
\begin{equation*}
  g(x)=\left(x-1\right)^{5}\left(x-2\right)^{3}\left(x-2^2\right)^{2}.
\end{equation*}
MAGMA experiments yield that $\mathcal{C}$ is a $[21,\,11,\,6]$ code and the minimum symbol-pair distance of $\mathcal{C}$ is $12$, which is consistent with our result in Theorem \ref{thmMDS3p12p}.
\end{example}

\section{Conclusions}

In this paper, for $n=3p$, we construct two new classes of MDS symbol-pair codes over $\mathbb{F}_{p}$ with $p$ an odd prime by employing repeated-root cyclic codes:
\begin{itemize}
  \item $[3p,\,3p-8,\,5]$ code with $d_{p}=10$;
  \item $[3p,\,3p-10,\,6]$ code with $d_{p}=12$.
\end{itemize}
As mentioned in Table \ref{Tab-constacyclic-MDS}, these codes poss minimum symbol-pair distance bigger than all the known MDS symbol-pair codes from constacyclic codes.
Note that alongside with larger minimum symbol-pair distance, much more cases need to be considered, which has not been explored.

\end{document}